\theoremstyle{plain}
\newenvironment{claiminproof}[1]{\medskip\par\noindent\underline{Claim:}\space#1}{}
\newenvironment{claimproof}[1]{\begin{quote}\par\noindent\emph{Proof of the Claim:}\space#1}{[\emph{End, Proof of the Claim}]\end{quote}}
\DeclareFontFamily{U}{bigshuffle}{}
\DeclareFontShape{U}{bigshuffle}{m}{n}{
  <5-8> s*[1.7] shuffle7
  <8->  s*[1.7] shuffle10
}{}
\DeclareSymbolFont{BigShuffle}{U}{bigshuffle}{m}{n}
\DeclareMathSymbol\bigshuffle{\mathop}{BigShuffle}{"001}
\DeclareMathSymbol\bigcshuffle{\mathop}{BigShuffle}{"002}
\newcommand{\NP}{\textsf{NP}}
\newcommand{\PSPACE}{\textsf{PSPACE}}
\newcommand{\PTIME}{\textsf{P}}
\begin{document}
\title{Constrained Synchronization and Subset Synchronization Problems for Weakly Acyclic Automata}
\titlerunning{Constrained Synchronization and Weakly Acyclic Automata}

%
%
\author{Stefan Hoffmann\orcidID{0000-0002-7866-075X}}
\authorrunning{S. Hoffmann}
%
\institute{Informatikwissenschaften, FB IV, 
  Universit\"at Trier,  Universitätsring 15, 54296~Trier, Germany, 
  \email{hoffmanns@informatik.uni-trier.de}}
\maketitle              
\begin{abstract}
 We investigate the constrained synchronization problem for weakly acyclic, or partially ordered,
 input automata. We show that, for input automata of this type, the problem is always
 in $\NP$.
 Furthermore, we give a full classification of the realizable complexities for constraint
 automata with at most two states and over a ternary alphabet.
 We find that most constrained problems that are \PSPACE-complete in general
 become \NP-complete. However, there also 
 exist constrained problems that are \PSPACE-complete in the general setting
 but become polynomial time solvable when considered for weakly acyclic input automata.
 We also investigate
 two problems related to subset synchronization, namely if there exists
 a word mapping all states into a given target subset of states, and
 if there exists a word mapping one subset into another. Both problems
 are $\PSPACE$-complete in general, but in our setting the former is polynomial time solvable and the latter is $\NP$-complete.
 %

\keywords{automata theory \and constrained synchronization \and computational complexity \and weakly acyclic automata \and subset synchronization} 
\end{abstract}

\section{Introduction}
\label{sec:introduction}



A deterministic semi-automaton is synchronizing if it admits a reset word, i.e., a word which leads to a definite
state, regardless of the starting state. This notion has a wide range of applications, from software testing, circuit synthesis, communication engineering and the like, see~\cite{San2005,Vol2008}.  The famous \v{C}ern\'y conjecture~\cite{Cer64}
states that a minimal length synchronizing word, for an $n$-state automaton, has length
at most $(n-1)^2$. 
We refer to the mentioned survey articles~\cite{San2005,Vol2008} for details\footnote{A new and updated survey article (in Russian) is currently in preparation by Mikhail V. Volkov~\cite{VolkovSurvey}.}.

Due to its importance, the notion of synchronization has undergone a range of generalizations and variations
for other automata models.
In some  generalizations, related to partial automata~\cite{Martyugin12}, only certain paths, or input words, are allowed (namely those for which the input automaton is defined). 

In~\cite{Gusev:2012}
the notion of constrained synchronization was 
introduced in connection with a reduction procedure
for synchronizing automata.
The paper \cite{DBLP:conf/mfcs/FernauGHHVW19} introduced the computational problem of constrained 
synchronization. In this problem, we search for a synchronizing word coming from a specific subset of allowed
input sequences. For further motivation and applications we refer to the aforementioned paper \cite{DBLP:conf/mfcs/FernauGHHVW19}.
In this paper, a complete analysis of the complexity landscape when the constraint language is given by 
small partial automata with up to two states and an at most ternary alphabet was done.
It is natural to extend this result to other language classes, or
even to give a complete classification of all the complexity classes that could arise.
For commutative regular constraint languages, a full classification of the realizable
complexities was given in~\cite{DBLP:conf/cocoon/Hoffmann20}.
In~\cite{DBLP:conf/ictcs/Hoffmann20}, it was shown that for polycyclic constraint languages, the problem is always in~$\NP$.

Let us mention that restricting the solution space by a regular language
has also been applied in other areas, for example to topological sorting~\cite{DBLP:conf/icalp/AmarilliP18},
solving word equations~\cite{Diekert98TR,DBLP:journals/iandc/DiekertGH05}, constraint programming~\cite{DBLP:conf/cp/Pesant04}, or
shortest path problems~\cite{DBLP:journals/ipl/Romeuf88}.
The road coloring problem asks for a labeling of a given graph  such that a synchronizing
automaton results. A closely related problem to our problem of constrained synchronization is to restrict the possible labeling(s), and
this problem was investigated in~\cite{DBLP:journals/jcss/VorelR19}.

In~\cite{DBLP:conf/mfcs/FernauGHHVW19} it was shown that we can realize $\PSPACE$-complete, $\NP$-complete
or polynomial time solvable constrained problems by appropriately choosing 
a constraint language.
Investigating the reductions from~\cite{DBLP:conf/mfcs/FernauGHHVW19},
we see that most reductions yield automata with a sink state, which then must be the unique synchronizing state.
Hence, we can conclude
that we can realize these complexities with this type of input automaton.

Contrary, for example, unary automata are synchronizing only if they admit no non-trivial cycle, i.e., 
only a single self-loop. In this case, we can easily decide synchronizability
for any constraint language in polynomial time.
Hence, for these simple types of automata, the complexity drops considerably.
So, a natural question is, if we restrict the class of input automata, what complexities
are realizable?

Here, we will investigate this question for the class of weakly acyclic input automata.
These are automata such that the transition relation induces a partial order on the state sets.
We will show that for this class, the constrained synchronization problem
is always in $\NP$.
Then, in the spirit of the work~\cite{DBLP:conf/mfcs/FernauGHHVW19},
we will give a full classification of the complexity landscape for constraint automata
with up to three states and a ternary alphabet.
Compared with the classification result from~\cite{DBLP:conf/mfcs/FernauGHHVW19},
we find that most problems that are $\PSPACE$-complete in general will become $\NP$-complete.
However, a few, in general $\PSPACE$-complete, cases become polynomial time solvable
for weakly acyclic input automata.

Related synchronization problems for weakly acyclic automata were previously investigated 
in~\cite{Ryzhikov19a}. For example, in~\cite{Ryzhikov19a},
it was shown that the problem to decide if a given subset of states could be mapped to a
single state, a problem $\PSPACE$-complete for general automata~\cite{DBLP:journals/jcss/BerlinkovFS21,DBLP:journals/ipl/Rystsov83}, 
is $\NP$-complete for weakly acyclic automata.

Furthermore, we investigate two problems
related to subset synchronization, namely the problem if we can map the whole state set into a given target set by some word, 
and if we can map any given starting set into another target set.
Both problems are $\PSPACE$-complete in general~\cite{DBLP:journals/jcss/BerlinkovFS21,DBLP:journals/cc/BlondinKM16,DBLP:conf/ictcs/Hoffmann20,DBLP:journals/jcss/LuksM88,DBLP:journals/ipl/Rystsov83,San2005}. However, for weakly acyclic automata
the former becomes polynomial time solvable, as we will show here, and the latter becomes $\NP$-complete.

Similar subset synchronization problems, for general, strongly connected and synchronizing automata, were investigated in~\cite{DBLP:journals/jcss/BerlinkovFS21}.

Weakly acyclic automata are also known as partially ordered automata~\cite{DBLP:journals/jcss/BrzozowskiF80},
or acyclic automata~\cite{DBLP:conf/wia/JiraskovaM12}.
As shown in~\cite{DBLP:journals/jcss/BrzozowskiF80}, the languages recognized
by weakly acyclic automata are precisely the languages recognized by $\mathcal R$-trivial monoids.

\section{Preliminaries}
\label{sec:preliminaries}

By $\Sigma$ we denote a finite set of symbols, also called an \emph{alphabet}.
By $\Sigma^*$ we denote the set of all \emph{words} over $\Sigma$, i.e., finite sequences
with the concatenation operation. The \emph{empty word} is denoted by $\varepsilon$.
A \emph{language} $L$ is a subset $L \subseteq \Sigma^*$.

A \emph{partial deterministic finite automaton (PDFA)}
is denoted by a quintuple $\mathcal A =(\Sigma, Q, \delta, q_0, F)$,
where $Q$ is a \emph{finite set of states}, $\Sigma$ the \emph{input alphabet},
$\delta : Q \times \Sigma \to Q$ is a \emph{partial transition function},
$q_0$ the
\emph{start state} and $F \subseteq Q$ the set of \emph{final states}.
An automaton $\mathcal A = (\Sigma, Q, \delta, q_0, F)$
is called \emph{complete}, if $\delta$ is a total
function, i.e., $\delta(q, x)$ is defined for any $q \in Q$
and $x \in \Sigma$.

In the usual way, the transition function $\delta$
can be extended to a function $\hat \delta : Q \times \Sigma^* \to Q$ by setting, for $q \in Q$, $u \in \Sigma^*$
and $x \in \Sigma$, $\hat \delta(q, \varepsilon) = q$
and $\hat \delta(q, ux) = \delta(\hat \delta(q, u), x)$.
In the following, we will drop the distinction between $\delta$ and $\hat \delta$
and will denote both functions simply by $\delta$.

For $S \subseteq Q$ and $u \in \Sigma^*$, we set $\delta(S, u) = \{ \delta(s, u) \mid s \in S \mbox{ and } \delta(s, u) \mbox{ is defined } \}$
and
$\delta^{-1}(S, u) = \{ q \in Q \mid \delta(q, u) \mbox{ is defined and } \delta(q, u) \in S \}$.
For $q \in Q$ and $u \in \Sigma^*$, we set $\delta^{-1}(q, u) = \delta^{-1}(\{q\}, u)$.

The language \emph{recognized} by $\mathcal A$
is $L(\mathcal A) = \{ u \in \Sigma^* \mid \delta(q_0, u) \in F\}$.

We say that $q \in Q$ is \emph{reachable} from $p \in Q$ (in $\mathcal A$)
if there exists a word $u \in \Sigma^*$ such that $\delta(p, u) = q$.

For $\mathcal A = (\Sigma, Q, \delta, q_0, F)$ and $\Gamma \subseteq \Sigma$,
by $\mathcal A_{|\Gamma} = (\Gamma, Q, \delta_{|\Gamma}, q_0, F)$
we denote the automaton $\mathcal A$ \emph{restricted to the subalphabet} $\Gamma$, i.e.,
$\delta_{|\Gamma} : Q \times \Gamma \to Q$ with $\delta_{|\Gamma}(q, x) = \delta(q, x)$
for $q \in Q$ and $x \in \Gamma$.

We say a letter $x \in \Sigma$ induces
a \emph{self-loop} at a state $q \in Q$,
if $\delta(q, x) = q$.

A state $s \in Q$ is called a \emph{sink state},
if every letter induces a self-loop at it, i.e.,
$\delta(q, x) = q$ for any $x \in \Sigma$.

An automaton $\mathcal A = (\Sigma, Q, \delta, q_0, F)$
is called \emph{weakly acyclic}, if it is complete
and for any $q \in Q$ and $u \in \Sigma^* \setminus \{\varepsilon\}$,
if $\delta(q, u) = q$, then $\delta(q, x) = q$
for any letter $x$ appearing in $u$, i.e., the simple\footnote{A cycle is simple
if it only involves distinct states~\cite{Ryzhikov19a}.} cycles are self-loops. Equivalently, the reachability 
relation is a partial order. Here, we say a state $q$ is larger than another state $p$, if
$q$ is reachable from $p$ in $\mathcal A$. A state
in a weakly acyclic automaton
is called \emph{maximal}, if it is maximal
with respect to this partial order.
Note that here, we require
weakly acyclic automata to be complete.
This is in concordance with~\cite{Ryzhikov19a}.
However, partially ordered automata are sometimes allowed to be partial in the literature~\cite{DBLP:journals/iandc/KrotzschMT17}.
Equivalently, an automaton is weakly acyclic
if and only if there exists an ordering $q_1,\ldots, q_n$
of its states such that if $\delta(q_i, x) = q_j$
for some letter $x \in \Sigma$,
then $i \le j$, i.e., we can \emph{topologically sort} the states.

A \emph{semi-automaton} $\mathcal A = (\Sigma, Q, \delta)$ 
is a finite complete automaton without a specified start state
and with no specified set of final states.
Every notion defined for complete automata
that does not explicitly use the start state and the set of final states
is also defined in the same way for semi-automata.
For example, being weakly acyclic.
When the context is clear, we call both finite automata and semi-automata simply \emph{automata}.

A complete automaton $\mathcal A$ is called \emph{synchronizing} if there exists a word $w \in \Sigma^*$ with $|\delta(Q, w)| = 1$. In this case, we call $w$ a \emph{synchronizing word} for $\mathcal A$.
We call a state $q\in Q$ with $\delta(Q, w)=\{q\}$ for some synchronizing word $w\in \Sigma^*$ a \emph{synchronizing state}.

For a fixed PDFA $\mathcal B = (\Sigma, P, \mu, p_0, F)$,
we define the \emph{constrained synchronization problem}:

\begin{quote}
\begin{definition}
 \textsc{$L(\mathcal B)$-Constr-Sync}\\
\emph{Input}: Deterministic semi-automaton $\mathcal A = (\Sigma, Q, \delta)$.\\
\emph{Question}: Is there a synchronizing word $w$ for $\mathcal A$ with  $w \in L(\mathcal B)$?
\end{definition}
\end{quote}

The automaton $\mathcal B$ will be called the \emph{constraint automaton}.
If an automaton~$\mathcal A$ is a yes-instance of \textsc{$L(\mathcal B)$-Constr-Sync} we call $\mathcal A$ \emph{synchronizing with respect to~$\mathcal{B}$}. Occasionally,
we do not specify $\mathcal{B}$ and rather talk about \textsc{$L$-Constr-Sync}.
The unrestricted synchronization problem, i.e., $\Sigma^*$\textsc{-Constr-Sync}
in our notation, is in $\PTIME$~\cite{Vol2008}.
We are going to investigate this problem
for weakly acyclic input automata only.

\begin{quote}
\begin{definition}
 \textsc{$L(\mathcal B)$-WAA-Constr-Sync}\\
\emph{Input}: Weakly acyclic semi-automaton $\mathcal A = (\Sigma, Q, \delta)$.\\
\emph{Question}: Is there a synchronizing word $w$ for $\mathcal A$ with  $w \in L(\mathcal B)$?
\end{definition}
\end{quote}

We assume the reader to have some basic knowledge in computational complexity theory and formal language theory, as contained, e.g., in~\cite{HopUll79}. For instance, we make use of  regular expressions to describe languages.
And we make use of complexity classes like $\PTIME$, $\NP$, or $\PSPACE$. 
The following was shown in~\cite{DBLP:conf/mfcs/FernauGHHVW19}.

\begin{theorem}[\cite{DBLP:conf/mfcs/FernauGHHVW19}]
\label{thm:classification_MFCS_paper}
 Let $\mathcal B = (\Sigma, P, \mu, p_0, F)$
 be a PDFA.
 If $|P|\le 1$ or $|P| = 2$ and $|\Sigma|\le 2$, then $L(\mathcal B)\textsc{-Constr-Sync} \in \PTIME$.
 For $|P| = 2$ with $|\Sigma| = 3$, up to symmetry by renaming of the letters,
 $L(\mathcal B)\textsc{-Constr-Sync}$
 is $\PSPACE$-complete precisely in the following cases for $L(\mathcal B)$:
 $$
  \begin{array}{llll}
    a(b+c)^*        & (a+b+c)(a+b)^*  & (a+b)(a+c)^* & (a+b)^*c \\
    (a+b)^*ca^*     & (a+b)^*c(a+b)^* & (a+b)^*cc^*  & a^*b(a+c)^* \\
    a^*(b+c)(a+b)^* & a^*b(b+c)^*     & (a+b)^*c(b+c)^* & a^*(b+c)(b+c)^*
  \end{array}
 $$
 and polynomial time solvable in all other cases.
\end{theorem}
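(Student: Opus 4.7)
The plan is to dispose of the two easy regimes by direct inspection and then to carry out a systematic case analysis of all two-state ternary constraint automata, classifying each instance by either a $\PSPACE$-hardness reduction or a polynomial-time algorithm. For $|P|\le 1$ the language $L(\mathcal B)$ is either empty or $\Sigma^*$; the former is a trivial no-instance and the latter reduces to unconstrained synchronization, which is in $\PTIME$ by~\cite{Vol2008}. For $|P|=2$ and $|\Sigma|\le 2$, I would enumerate the finitely many PDFAs up to renaming of letters and of states: the resulting languages all have very simple forms (finite, or variants of $a^*$, $a^*b$, $(a+b)^*$, etc.), and for each a direct polynomial algorithm is available, typically by guessing a short mandatory prefix and then invoking polynomial unrestricted synchronization on the input automaton restricted to an appropriate sub-alphabet.

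For $|P|=2$, $|\Sigma|=3$, the central step is enumeration. Each letter is either undefined everywhere, induces a self-loop at one or both states, or flips between the two states; together with the three non-empty choices of $F$ this yields a finite catalogue modulo renaming of $a,b,c$. For each representative I would compute $L(\mathcal B)$ as a regular expression and classify. Membership in $\PSPACE$ is immediate by the standard guess-and-verify algorithm that tracks the current subset of $Q$ together with the current constraint state in polynomial space. For the twelve listed expressions I would provide $\PSPACE$-hardness reductions, typically from intersection-nonemptyness for DFAs or equivalently from partial-automaton synchronization: the free $(a+b)^*$-blocks simulate one step of a product construction while the mandatory occurrences of $c$ serve as synchronising separators. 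The twelve cases differ only in their boundary behaviour, so a common reduction core suffices with case-specific adjustments to the gadgets handling the first and last letter of the constructed word.

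For each of the remaining cases I would exhibit a matching polynomial algorithm. These fall into a few patterns: languages in which only finitely many words are permitted reduce to a bounded search; languages of the form $u\Sigma^*$ reduce to a constant number of modified unrestricted synchronization problems; and languages of the form $a^*b^*$ (or similar) reduce, after analysing the single transition letter, to unrestricted synchronization on a sub-alphabet. The hard part will be the combination of case enumeration and reduction design: for each of the twelve hard expressions one must check that the reduction actually produces synchronizing words lying in the prescribed language, which requires a case-specific verification, and for every remaining expression one must simultaneously design a polynomial algorithm and prove its correctness — a task made tractable only by the fact that both the catalogue and the structural types of languages arising are finite.
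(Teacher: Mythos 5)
First, note that the paper does not prove this statement at all: Theorem~\ref{thm:classification_MFCS_paper} is imported verbatim from~\cite{DBLP:conf/mfcs/FernauGHHVW19}, so there is no in-paper proof to compare against. Judged on its own terms, your proposal reproduces the correct overall architecture of the cited proof (finite enumeration of two-state ternary PDFAs up to letter renaming, the generic $\PSPACE$ upper bound by tracking the current subset of $Q$ together with the constraint state, and a per-case split into hardness reductions versus polynomial algorithms), but it is a plan rather than a proof: the entire mathematical content of the theorem lives in the twelve hardness reductions and the complementary algorithms, and you carry out none of them. A classification theorem is exactly as strong as its worst case, and deferring ``case-specific verification'' for all twenty-odd cases leaves nothing established.

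Two concrete points where the sketch as written would mislead you if you tried to execute it. First, your description of the hardness gadget --- ``free $(a+b)^*$-blocks simulate one step of a product construction while the mandatory occurrences of $c$ serve as synchronising separators'' --- presupposes repeated occurrences of the special letter, but every one of the twelve listed languages forces \emph{exactly one} occurrence of the distinguished letter (it is the single transition between the two constraint states); there are no separators to iterate. The reductions that actually work (cf.\ the $\NP$-analogues in Proposition~\ref{prop:np_complete_cases} of this paper, and the $\PSPACE$ ones in~\cite{DBLP:conf/mfcs/FernauGHHVW19}) instead start from a $\PSPACE$-complete synchronization problem such as \textsc{Sync-From-Subset}, use the one mandatory letter to select the starting subset (or to finish the computation), and embed the whole hard instance into a single starred block. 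Second, for $|P|\le 1$ the language is not ``empty or $\Sigma^*$'' but empty or $\Gamma^*$ for the subalphabet $\Gamma\subseteq\Sigma$ of letters on which the single state has a defined self-loop (and one must also distinguish whether that state is final); this still reduces to unconstrained synchronization of $\mathcal A_{|\Gamma}$, but the case split must be stated correctly, since the same subalphabet phenomenon recurs throughout the $|P|=2$ enumeration and is precisely what makes the catalogue larger than the one you describe.
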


In weakly acyclic automata, maximal
states, sink states and synchronizing
states are related as stated in the next lemmata.

\begin{lemmarep} 
\label{lem:waa_sink_equals_max}
 In a weakly acyclic automaton\footnote{Recall that here, weakly acyclic automata are always complete. For partial automata such that the reachability relation is a partial order, this does not have to be true.} a state is maximal
 if and only if it is a sink state.
\end{lemmarep}
\begin{proof}
 As from a sink state no other state is reachable, it could not have any
 proper successor states, hence is maximal.
 Conversely, if a state is maximal, then by definition no other state is reachable
 from it, hence, every outgoing transition has to go back to this state, i.e.,
 induces a self-loop.~\qed
\end{proof}

%
%
\begin{lemmarep} 
\label{lem:sync_waa_has_sink_state}
 Let $\mathcal A = (\Sigma, Q, \delta)$
 be a weakly acyclic automaton.
 If $\mathcal A$ is synchronizing, then the synchronizing
 state must be a unique sink state in $\mathcal A$
 that is reachable from every other state and, conversely,
 such a state is a synchronizing state.
\end{lemmarep}
\begin{proof}
 Let $\mathcal A = (\Sigma, Q, \delta)$ be weakly acyclic
 and $w \in \Sigma^*$ be
 such that $\delta(Q, w) = \{ q\}$ for some $q \in Q$.
 Hence, $q$ is reachable from every other state and so must be maximal.
 By Lemma~\ref{lem:waa_sink_equals_max}, $q$ is a sink state.
 Conversely, if we have a sink state $s \in Q$ 
 reachable from every other state $q \in Q$ by a word $w_q$,
 we can construct a synchronizing word.
 We can suppose $Q$ has more than two states, for otherwise the problem is trivial.
 Set $w_1 = w_{q}$ and $S_1 = \delta(Q, w_1)$ for some $q \in Q \setminus \{s\}$.
 Then, inductively, let $i > 1$ and, if $|S_{i-1}| > 1$,
 choose $q \in S_{i-1} \setminus \{s\}$ and set $w_i = w_{i-1} w_{q}$.
 As $q$ and $s$ are mapped to $s \in S_{i-1}$, in this case $|S_i| < |S_{i-1}|$.
 So, after at most $|Q| - 1$ many steps, for some $i \in \{1,\ldots, |Q|\}$
 we must have $S_i = \{s\}$ and $\delta(Q, w_i) = \{s\}$.~\qed
\end{proof}

\begin{toappendix}
\begin{remark} 
Note that the procedure used in the proof of Lemma~\ref{lem:sync_waa_has_sink_state}
works for any automaton with a sink state reachable from any other state.
\end{remark}
\end{toappendix}

With Lemma~\ref{lem:sync_waa_has_sink_state}, we can test if a given
weakly acyclic automaton is synchronizing.
First, check every state
if it is a sink state. If we have found a unique sink state,
then do a breadth-first search from this sink state by traversing the transitions
in the reverse direction. This gives a better algorithm than the general algorithm, which runs in time $O(|\Sigma||Q|^2)$, see~\cite{Vol2008}.

\begin{corollary}
 For weakly acyclic automata we can decide in time $O(|\Sigma||Q| + |Q|)$
 if it is synchronizing.
\end{corollary}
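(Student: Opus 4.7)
The plan is to implement the algorithm sketched just before the statement and verify its running time. First, I would identify the sink states by a single pass through the transition table: for each pair $(q, x) \in Q \times \Sigma$, test whether $\delta(q, x) = q$, and record the states $q$ for which this holds for all $x$. This produces the list of all sink states in time $O(|\Sigma||Q|)$.

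By Lemma~\ref{lem:sync_waa_has_sink_state}, the automaton is synchronizing if and only if it has exactly one sink state $s$ that is reachable from every other state. So I would reject immediately unless the previous step returned exactly one sink $s$. Assuming a unique sink $s$, I would then run a breadth-first search from $s$ in the \emph{reverse} transition graph. Concretely, I would first build reverse adjacency lists by scanning the transition table once, appending each transition $\delta(p, x) = q$ to the list of incoming transitions of $q$; this preprocessing takes time $O(|\Sigma||Q| + |Q|)$. The BFS itself traverses each reverse edge at most once, so it also runs in $O(|\Sigma||Q| + |Q|)$ time. The automaton is synchronizing iff the BFS reaches every state of $Q$.

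Correctness follows directly from Lemma~\ref{lem:sync_waa_has_sink_state}, and summing the costs of the two phases yields the claimed bound $O(|\Sigma||Q| + |Q|)$. There is no real obstacle here; the point is simply to replace the general synchronization test, which runs in time $O(|\Sigma||Q|^2)$ via the pair automaton, with an algorithm that exploits the structural characterization of synchronization available for weakly acyclic automata.
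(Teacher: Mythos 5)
Your proof is correct and follows exactly the algorithm the paper itself sketches just before the corollary: find the (unique) sink state by one pass over the transition table, then do a breadth-first search from it along reversed transitions, with correctness supplied by Lemma~\ref{lem:sync_waa_has_sink_state}. Your version merely spells out the running-time accounting in more detail; there is no substantive difference.
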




\section{Constrained Synchronization of Weakly Acyclic Automata}

In general, for any constraint automaton, the constrained synchronization problem
is always in $\PSPACE$, see~\cite{DBLP:conf/mfcs/FernauGHHVW19}.
Here, we show that for weaky acyclic input automata, the constrained
synchronization problem is always in $\NP$. First, we establish
a bound on the size of a shortest synchronizing word, which
directly yields containment in $\NP$ as we have a polynomially bounded certificate
which could be verified in polynomial time.

\begin{proposition}
\label{prop:length_constr_sync_word_in_waa}
 Let $\mathcal A$ be a weakly acyclic automaton with $n$ states and $\mathcal B = (\Sigma, P, \mu, p_0, F)$ be a fixed PDFA.
 Then, a shortest synchronizing word $w \in L(\mathcal B)$ for $\mathcal A$
 has length at most $|P|\binom{n}{2}$. 
\end{proposition}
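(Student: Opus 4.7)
The plan is to exhibit, for any shortest synchronizing word $w=w_1\cdots w_\ell\in L(\mathcal B)$ of $\mathcal A$, a bound on $\ell$ by counting the pairs visited in the product space $2^Q\times P$. Concretely I would record the trajectory $(S_i,p_i)$ with $S_i=\delta(Q,w_1\cdots w_i)$ and $p_i=\mu(p_0,w_1\cdots w_i)$, and argue first that by minimality no two pairs $(S_i,p_i)$ can coincide: if $(S_i,p_i)=(S_j,p_j)$ with $i<j$, then $w_1\cdots w_i w_{j+1}\cdots w_\ell$ would be a strictly shorter word still synchronizing $\mathcal A$ and still lying in $L(\mathcal B)$. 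Hence it suffices to bound the number of distinct pairs appearing along the trajectory.

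The key technical step is a potential argument in the first coordinate that exploits weak acyclicity. By Lemma~\ref{lem:sync_waa_has_sink_state} there is a unique sink $s\in Q$ reachable from every state, and one can topologically order $Q=\{q_1,\ldots,q_n\}$ with $q_n=s$; set $r(q_i)=n-i$ and $\Phi(S)=\sum_{q\in S}r(q)$, so that $\Phi(Q)=\binom{n}{2}$ and $\Phi(\{s\})=0$. I would then prove the central claim that $\Phi(S_{i+1})<\Phi(S_i)$ whenever $S_{i+1}\neq S_i$. Weak acyclicity forces any non-self-loop transition to advance a state strictly in the topological ordering, i.e.\ $\delta(q,x)\neq q$ implies $r(\delta(q,x))<r(q)$; so if $S$ changes under $x$ then some $q\in S_i$ must move non-trivially, and since coincidences in the image only discard non-negative rank contributions when passing from the multiset $\{r(\delta(q,x)):q\in S_i\}$ to the set $\delta(S_i,x)$, we obtain a strict drop of $\Phi$. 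As $\Phi$ is a non-negative integer bounded by $\binom{n}{2}$, the trajectory contains at most $\binom{n}{2}$ steps that change $S$, and hence at most $\binom{n}{2}+1$ distinct values of $S$ appear.

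To finish, I would combine this with the constraint-automaton side: within any maximal stretch of the trajectory on which $S$ remains constant the $p_i$-coordinates must be pairwise distinct (again by minimality of $w$), so each such stretch contributes at most $|P|$ vertices. Multiplying the two bounds accounts for all visited vertices, and a careful boundary count (using that the first stretch begins at the fixed vertex $(Q,p_0)$ and the last stretch terminates the moment an accepting $p$ is first encountered with $S=\{s\}$, lest a shorter word exist) tightens the estimate to $\ell\le|P|\binom{n}{2}$.

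The principal obstacle is the potential-decrease lemma: in a general automaton a letter could act as a non-trivial permutation on some subset, preserving both its cardinality and its rank multiset and thereby defeating any monotone rank-based potential. Weak acyclicity is precisely the hypothesis that excludes such rank-preserving non-trivial moves, since every transition that is not a self-loop strictly advances its source in every topological order. Once that lemma is in place, the remainder is an elementary combinatorial accounting combining the $\Phi$-potential bound with the $|P|$-pigeonhole on constraint-automaton states within each constant-$S$ stretch.
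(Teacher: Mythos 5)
Your proof is correct and takes essentially the same route as the paper's: both exploit weak acyclicity to get a $\binom{n}{2}$-bounded monotone potential on the subset trajectory (the paper tracks the tuple of topological indices of $\delta(q_j,u)$, you track $\Phi(S)=\sum_{q\in S}r(q)$, which is the same quantity in disguise) and both use a pigeonhole on the $|P|$ states of $\mathcal{B}$ to cap each constant-subset stretch, your cut-and-paste on a recurring pair $(S_i,p_i)$ being a slightly cleaner packaging of the paper's step that replaces a factor looping in $\mathcal{B}$ and fixing $S_i$ pointwise. The one soft spot is your final ``careful boundary count'': as sketched it does not actually recover the $|P|-1$ slack, so the argument honestly delivers $|P|\binom{n}{2}+|P|-1$ rather than $|P|\binom{n}{2}$ --- but the paper's own proof has exactly the same looseness, and the polynomial bound is all that is needed for the $\NP$-membership in Theorem~\ref{thm:constraint_sync_weakly_acyclic_in_NP}.
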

\begin{proof} 
 %
 Let $q_1 ,\ldots, q_n$ be a topological sorting of the states of $\mathcal A$.
 We represent the situation after reading a word $u \in \Sigma^*$, i.e., the set $\delta(Q, u)$,
 by a tuple $(i_1, \ldots, i_n) \in \mathbb \{1,\ldots,n\}^n$,
 where $i_j$ is the index of $\delta(q_j, u)$ in the topological sorting, i.e.,
 $\delta(q_j, u) = q_{i_j}$.
 Then, $u \in \Sigma^*$ is synchronizing if and only if
 the corresponding tuple is $(n, \ldots, n)$.
 The starting tuple is $(1, \ldots, n)$.
 For $(i_1, \ldots, i_n), (j_1, \ldots, j_n) \in \{1,\ldots,n\}^n$
 we write $(i_1, \ldots, i_n) < (j_1, \ldots, j_n)$
 if, for all $r \in \{1,\ldots,n\}$, we have $i_r \le j_r$
 and there exists at least one $s \in \{1,\ldots, n \}$
 such that $i_s < j_s$.

 Let $w = x_1 \cdots x_m \in L(\mathcal B)$
 with $x_i \in \Sigma$ for $i \in \{1,\ldots,m\}$. 
 Then, set $S_i = \delta(Q, x_1 \cdots x_i)$ and $S_0 = Q$.
 Suppose $S_{i+ |P|} = S_i$
 for some $i \in \{0, 1,\ldots, n\}$. 
 Then, as $\mathcal A$ is weakly acyclic\footnote{More generally, it is also easy to see that in weakly acyclic automata, no word can induce a non-trivial permutation of a subset of states.},
 for the word $u = x_{i+1} \cdots x_{i + |P|}$
 we have $\delta(q, u) = q$ for any $q \in S_i$
 and, as it has length $|P|$, it induces a loop in the constraint automaton $\mathcal B$.
 So, we can replace this factor of $w$
 by a shorter word $v \in \Sigma^*$ of length less than $|P|$
 that yields the same result, i.e.,
 $
  S_{i + |P|} = \delta(Q, x_1\cdots x_i v)
 $
 and $x_1\cdots x_i v x_{i + |P| + 1} \cdots x_n \in L(\mathcal B)$.

 Now, suppose $w = x_1 \cdots x_m \in L(\mathcal B)$ is a shortest synchronizing word for $\mathcal A$. 
 By the previous paragraph, we can suppose 
 $S_{i + |P|} \ne S_i$ for any $i \in \{1,\ldots, n\}$.
 As $\mathcal A$ is weakly acyclic, and we can only move forward in the topological sorting,
 if $\delta(Q, u) \ne \delta(Q, uv)$,
 then for the tuple $(i_1, \ldots, i_n)$ corresponding
 to $\delta(Q, u)$ and for the tuple $(j_1, \ldots, j_n)$ for $\delta(Q, uv)$
 we have $(i_1, \ldots, i_n) < (j_1, \ldots, j_n)$. Note that we have equality
 if and only if $\delta(Q, u) = \delta(Q, uv)$.
 As we start with $(1, \ldots, n)$
 and want to reach $(n, \ldots, n)$, 
 we have to increase at least $n - 1$ times the first entry, $n - 2$ times the second
 and so on.
 Now, by the previous reasoning, every $|P|$ symbols we can suppose we increase
 some component. Combining these observations yields that a shortest
 synchronizing word has length at most
 \[
  |P| \cdot \left( (n - 1) + (n - 2) + \ldots + 1 \right) = |P| \cdot \binom{n}{2}.
 \]
 This finishes the proof.~\qed
\end{proof}

With Proposition~\ref{prop:length_constr_sync_word_in_waa} we can conclude that
for weakly acyclic input automata, the constrained synchronization problem is always in $\NP$.

\begin{theoremrep}
\label{thm:constraint_sync_weakly_acyclic_in_NP}
 For weakly acyclic input automata and an arbitrary constraint automaton,
 the constrained synchronization problem
 is in $\NP$.
\end{theoremrep}
\begin{proof}
 By Proposition~\ref{prop:length_constr_sync_word_in_waa} we can guess a shortest synchronizing word in $L(\mathcal B)$
 of polynomial length.
 Verifying that such a word is indeed synchronizing could be done in polynomial time.~\qed
\end{proof}

\section{Subset Synchronization Problems}

Here, we will investigate the followig problems 
from~\cite{DBLP:journals/jcss/BerlinkovFS21,DBLP:journals/cc/BlondinKM16,DBLP:conf/ictcs/Hoffmann20,DBLP:journals/jcss/LuksM88,DBLP:journals/ipl/Rystsov83,San2005,DBLP:journals/ijfcs/Vorel16}
for weakly acyclic input automata.

\medskip
\noindent
\scalebox{.98}{\begin{minipage}{0.48\textwidth}
\begin{definition}
	\textsc{Sync-From-Subset}\\
	\emph{Input}: $A = (\Sigma, Q, \delta)$ and $S \subseteq Q$.\\
	\emph{Question}: Is there a word $w$ with $|\delta(S, w)| = 1$?
\end{definition}\end{minipage}}\hfill
\scalebox{.98}{\begin{minipage}{0.48\textwidth}
\begin{definition}
	\textsc{Sync-Into-Subset}\\
	\emph{Input}: $A = (\Sigma, Q, \delta)$ and $S \subseteq Q$.\\
	\emph{Question}: Is there a word $w$ with $\delta(Q, w) \subseteq S$?
\end{definition}
\end{minipage}}
\medskip


\begin{definition}\label{def:problem_set_transporter}
  \textsc{SetTransporter}\\
  \emph{Input:} $\mathcal A = (\Sigma, Q, \delta)$ and two subsets $S, T \subseteq Q$.\\
  \emph{Question:} Is there a word $w \in \Sigma^*$ such that $\delta(S, w) \subseteq T$?
\end{definition}


These problems are $\PSPACE$-complete in general~\cite{DBLP:journals/jcss/BerlinkovFS21,DBLP:journals/cc/BlondinKM16,DBLP:journals/ipl/Rystsov83,San2005} for at least binary alphabets.
In~\cite{Ryzhikov19a} it was shown that \textsc{Sync-From-Subset}
is $\NP$-complete for weakly acyclic input automata. 
Interestingly, for weakly acyclic input automata,
the complexity of \textsc{Sync-Into-Subset}
drops considerably. Namely, we could solve the problem in polynomial time.
Hence, the ability to have transitions that go backward seems to be essential
to get hardness above polynomial time solvability for this problem.

%
%
%
%
%
%


\begin{theoremrep}
\label{thm:sync_into_subset_waa_in_P}
 The problem $\textsc{Sync-Into-Subset}$
 is polynomial time solvable for weakly acyclic input automata.
 More generally\footnote{This more general formulation was pointed out by
 an anonymous referee.}, given $S, T \subseteq Q$ such that~$S$
 contains all maximal states reachable from~$S$,
 the existence of a word $w \in \Sigma^*$
 such that $\delta(S, w) \subseteq T$
 could be decided in polynomial time.
\end{theoremrep}
\begin{proof}
 We show the more general claim, the implication for $\textsc{Sync-Into-Subset}$
 is then implied by setting $S = Q$.
 Let $\mathcal A = (\Sigma, Q, \delta)$ be a weakly acyclic automaton.
 Let $R \subseteq S$ be the set of maximal states reachable from $S$ in $\mathcal A$
 and suppose $R \subseteq S$ and $T \subseteq Q$. Set $n = |Q|$.
 
 \begin{claiminproof}
 There exists $w \in \Sigma^*$
 such that $\delta(S, w) \subseteq T$
 if and only if $R \subseteq T$.
 \end{claiminproof}
 \begin{claimproof}
  By Lemma~\ref{lem:waa_sink_equals_max},
  the set $R$ only contains sink states.
  So, for any $w \in \Sigma^*$, we have $R \subseteq \delta(S, w)$.
  Hence, if $\delta(S, w) \subseteq T$, then $R \subseteq T$.
  Conversely, suppose $R \subseteq T$.
  Let $w_{\Sigma}$ contain every symbol from the input alphabet~$\Sigma$
  exactly once, in any order. Apart from the states in $R$, for every other state $q$ in $S$,
  there is a state $q' \ne q$ that can be reached 
  from $q$ by reading one symbol.
  As self-loops are the only cycles in weakly acyclic automata,
  we have $\delta(S, w_{\Sigma}^{n - 1}) \subseteq R$.
  Hence, we find that $\delta(S, w_{\Sigma}^{n - 1}) \subseteq T$.
 \end{claimproof}
 Hence, we only have to check if $R \subseteq T$, which could
 be done in polynomial time, as $R$ is easily computable.~\qed
\end{proof}

Not surprisingly, as \textsc{Sync-From-Subset} is $\NP$-complete~\cite{Ryzhikov19a} for at least binary alphabets, \textsc{SetTransporter} is $\NP$-complete
for at least binary alphabets.

\begin{theorem}\label{thm:set_transporter_weakly_acyclic}
 $\textsc{SetTransporter}$
 is $\NP$-complete for weakly acyclic input automata
 when the alphabet is fixed but contains at least two distinct letters.
\end{theorem}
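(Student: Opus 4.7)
The proof splits into \NP-membership and \NP-hardness.

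For \NP-membership, I adapt the tuple-counting argument of Proposition~\ref{prop:length_constr_sync_word_in_waa}, without the constraint-automaton factor. Given a candidate word $w = x_1 \cdots x_m$ realizing $\delta(S, w) \subseteq T$, define $f_i : S \to Q$ by $f_i(s) = \delta(s, x_1 \cdots x_i)$. If $f_{i-1} = f_i$ for some $i$, then $x_i$ induces a self-loop at every state of $f_{i-1}(S)$, so deleting $x_i$ yields a shorter word with the same final image; hence in a shortest witness no such step occurs. Fixing a topological sorting $q_1, \ldots, q_n$ of $Q$, each $f_i$ corresponds to the tuple $(j_s)_{s \in S}$ with $f_i(s) = q_{j_s}$; weak acyclicity forbids any coordinate from decreasing, so whenever $f_{i-1} \neq f_i$ the coordinate sum strictly increases by at least $1$. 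Since this sum lies in $\{|S|, \ldots, n|S|\}$, we obtain $m \le |S|(n - 1) \le n(n-1)$. A polynomial-length witness can therefore be guessed and verified in polynomial time.

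For \NP-hardness, I reduce from \textsc{Sync-From-Subset} on weakly acyclic input automata over a binary alphabet, which is \NP-hard by~\cite{Ryzhikov19a}. Given an instance $(\mathcal{A}, S)$ with $\mathcal{A} = (\Sigma, Q, \delta)$, form the square product automaton $\mathcal{A}^2 = (\Sigma, Q \times Q, \delta')$ with $\delta'((p, q), x) = (\delta(p, x), \delta(q, x))$, and set $S' = S \times S$ together with $T' = \{(q, q) \mid q \in Q\}$, i.e., the diagonal. Then $\delta'(S', w) \subseteq T'$ iff $\delta(p, w) = \delta(q, w)$ for all $p, q \in S$, iff $|\delta(S, w)| = 1$. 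This gives a correct many-one reduction, computable in polynomial time and preserving the alphabet.

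The only nontrivial verification is that $\mathcal{A}^2$ remains weakly acyclic. Suppose $\delta'((p, q), u) = (p, q)$ for some nonempty $u$; then $\delta(p, u) = p$ and $\delta(q, u) = q$, so by weak acyclicity of $\mathcal{A}$ every letter appearing in $u$ induces a self-loop at both $p$ and $q$, hence a self-loop at $(p, q)$ in $\mathcal{A}^2$. Thus every simple cycle in the product is a self-loop, as required. The two potential obstacles — establishing a polynomial length bound for shortest witnesses and preserving weak acyclicity under the product — are each dispatched by directly invoking the weak acyclicity of~$\mathcal{A}$, so no substantial further difficulty arises.
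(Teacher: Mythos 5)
Your proof is correct, but both halves take a genuinely different route from the paper. For membership, the paper does not re-run a counting argument: it encodes the instance $(\mathcal A,S,T)$ into a constrained synchronization instance by adding two fresh letters $a,b$ and a sink $s_f$, so that $\delta(S,w)\subseteq T$ iff $awb$ synchronizes the enlarged automaton, and then invokes Theorem~\ref{thm:constraint_sync_weakly_acyclic_in_NP} for the constraint $a\Sigma^*b$. Your direct adaptation of the tuple argument of Proposition~\ref{prop:length_constr_sync_word_in_waa} (delete letters that fix the image pointwise, observe that the coordinate sum in the topological order strictly increases otherwise, giving $m\le |S|(n-1)$) is self-contained and avoids the detour through constrained synchronization; it is sound, modulo the trivial edge cases $S=\emptyset$ or $T=\emptyset$, which the paper also dispatches separately. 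For hardness, the paper reuses the internals of Ryzhikov's SAT reduction for \textsc{Sync-From-Subset}, simply taking the same $S$ and $T=\{f\}$ with $f$ the sink state of that construction; your reduction instead treats \textsc{Sync-From-Subset} as a black box and passes to the square product $\mathcal A^2$ with $S'=S\times S$ and $T'$ the diagonal. Your check that the product remains weakly acyclic is exactly the point that needs verifying and is done correctly, and the reduction preserves the (binary) alphabet as required by the statement. The trade-off: the paper's hardness argument costs nothing in size but requires inspecting a specific external construction, whereas yours is modular at the price of a quadratic blow-up in the state set --- both are perfectly acceptable.
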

\begin{proof}
 For containment in $\NP$, suppose $(\mathcal A, S, T)$
 with $\mathcal A = (\Sigma, Q, \delta)$, $S, T \subseteq Q$,
 is an instance of $\textsc{SetTransporter}$ with $\mathcal A$ being weakly acyclic.
 Let $a,b \notin \Sigma$ be two new symbols and $s_f \notin Q$
 a new state. We can suppose $S, T$
 are non-empty, for otherwise, if $S = \emptyset$
 we have a trivial solution and if $S$ is non-empty and $T = \emptyset$ we have no solution at all.
 Then, construct $\mathcal A' = (\Sigma \cup \{a,b\}, Q \cup \{s_f\}, \delta')$
 with, for $q \in Q$ and $x \in \Sigma \cup \{a,b\}$,
 \[
  \delta'(q, x) = \left\{ 
  \begin{array}{ll}
   \delta(q, x) & \mbox{if } x \in \Sigma; \\ 
   s_f          & \mbox{if } x = a \mbox{ and } q \notin S; \\ 
   s_f          & \mbox{if } x = b \mbox{ and } q \in T; \\
   q            & \mbox{otherwise.}
  \end{array}
  \right.
 \]
 and $\delta'(s_f, x) = s_f$ for any $x \in \Sigma \cup \{a,b\}$.
 Note that $\delta'(Q\cup\{s_f\}, a) = S \cup \{s_f\}$, $\delta'(q, b) = s_f$ for $q \in Q$ if and only if $q \in T$ and that $\mathcal A'$ is weakly acyclic as we have only added self-loops
 or transitions going into the sink state $s_f$.
 Then, there exists $w \in \Sigma^*$
 such that $\delta(S, w) \subseteq T$ in $\mathcal A$
 if and only if $\delta'(Q, awb) = \{s_f\}$ in $\mathcal A'$.
 So, we have reduced the original problem to the problem
 to decide if $\mathcal A'$ has a synchronizing word for the constraint 
 language $a\Sigma^*b$. By Theorem~\ref{thm:constraint_sync_weakly_acyclic_in_NP},
 the last problem is in $\NP$.

 For \NP-hardness, we can use the same reduction
 as used in~\cite[Theorem 4]{Ryzhikov19a} to show \NP-hardness
 of $\textsc{Sync-From-Subset}$ with the same set $S$
 but setting $T = \{f\}$, where $f$ is the sink state used 
 in the reduction from~\cite{Ryzhikov19a}.\qed 
\end{proof}



In~\cite{DBLP:conf/ictcs/Hoffmann20}, it was shown that \textsc{SetTransporter}
is $\NP$-complete for general unary automata. For unary weakly acyclic automata,
the problem is in $\PTIME$.

\begin{propositionrep}
 If $|\Sigma| = 1$, then \textsc{SetTransporter}
 is in $\PTIME$ for weakly acyclic input automata.
\end{propositionrep}
\begin{proof}
 If $\mathcal A = (\{a\}, Q, \delta)$ is weakly acyclic over a unary alphabet, then
 it is a collection of paths that end in single state with a self-loops, where multiple paths
 could end in the same state labeled by a self-loop.
 Hence, for a given set $S \subset Q$, the states in $\delta(Q, a^{|Q|-1})$
 are precisely those states that are labeled by self-loops.
 Hence, $\delta(Q, a^i) = \delta(Q, a^{|Q|-1})$ for all $i \ge |Q|$
 and we only need to test the state sets $\delta(Q, a^i)$ for $i \in \{1,\ldots, |Q| - 1\}$
 if they equal a given target set $T \subseteq Q$. All these operations
 could be performed in polynomial time.~\qed
\end{proof}

\section{Constraint Automata with Two States and at most Three Letters}

%

Here, we give a complete classification of the complexity landscape
of the constraint synchronization problem with weakly acyclic automata
as input automata and when the constraint is given by an at most two state
PDFA over an at most ternary alphabet.

For our $\NP$-hardness result, we adapt a construction
due to Eppstein and Rystsov~\cite{Epp90,Rys80} which
uses the $\NP$-complete SAT problem~\cite{cook1971}.

\begin{quote}
    \textsc{SAT} \\ 
    \textrm{Input:} \textit{A set $X$ of $n$ boolean variables and a set $C$ of $m$ clauses;} \\ 
    \textrm{Question:} \textit{Does there exist an assignment of values to the variables in $X$
    such that all clauses in $C$ are satisfied?}
\end{quote}
First, we single out those constraint languages that give $\NP$-hard problems.

\begin{toappendix}
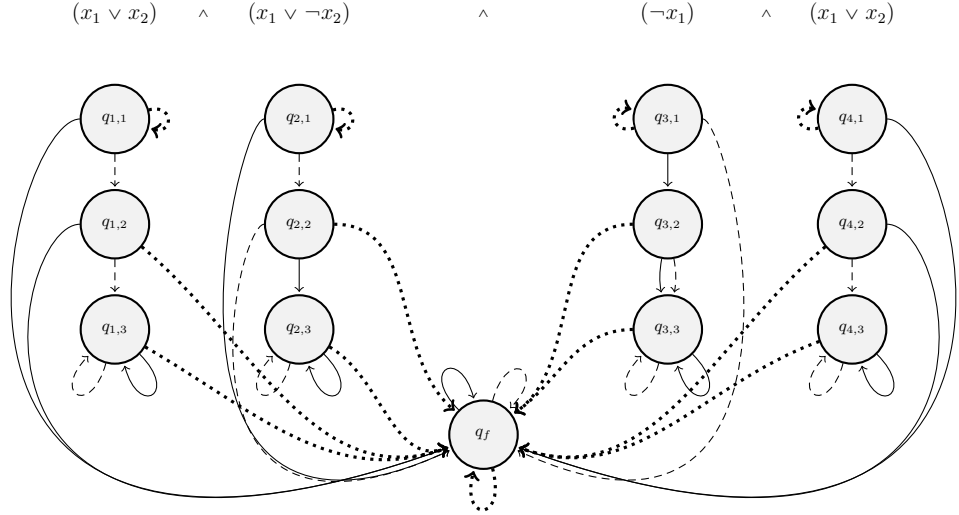
\begin{figure}
\centering
		    \begin{tikzpicture}[transform shape,scale=0.7,shorten >=1pt,node distance=2cm and 3.5cm,on grid,auto, state/.style={circle, draw, minimum size=1.3cm, fill=black!5, thick}] 
			
				\node (clause_1) {\large$(x_1 \vee x_2)$};
				\node[right=of clause_1] (clause_2) {\large $(x_1 \vee \neg x_2)$};
				\node[right=of clause_2] (clause_dots) {$\wedge$};
				\node[right=of clause_dots] (clause_{m-1}) {\large $(\neg x_1)$}; 
				\node[right=of clause_{m-1}] (clause_m) {\large $(x_1 \vee x_2)$};
				\node (label_wedge_1) at ($(clause_1)!0.48!(clause_2)$) {$\wedge$}; 
				\node (label_wedge_2) at ($(clause_2)!0.52!(clause_dots)$) {};
				\node (label_wedge_{m-2}) at ($(clause_dots)!0.46!(clause_{m-1})$) {};
				\node (label_wedge_{m-1}) at ($(clause_{m-1})!0.535!(clause_m)$) {$\wedge$};
				
				
				\node[below=of clause_dots] (q_{3ff,1}) {};
				\node[below=of q_{3ff,1}] (q_{3ff,2}) {};
				\node[below=of q_{3ff,2}] (q_{3ff,3}) {};

				\node[state, left=of q_{3ff,1}] (q_{2,1}) {$q_{2,1}$};
				\node[state, below=of q_{2,1}] (q_{2,2}) {$q_{2,2}$};
				\node[state, below=of q_{2,2}] (q_{2,3}) {$q_{2,3}$};

				\node[state, left=of q_{2,1}] (q_{1,1}) {$q_{1,1}$};
				\node[state, below=of q_{1,1}] (q_{1,2}) {$q_{1,2}$};
				\node[state, below=of q_{1,2}] (q_{1,3}) {$q_{1,3}$};

				\node[state, right=of q_{3ff,1}] (q_{m-1,1}) {$q_{3,1}$};
				\node[state, below=of q_{m-1,1}] (q_{m-1,2}) {$q_{3,2}$};
				\node[state, below=of q_{m-1,2}] (q_{m-1,3}) {$q_{3,3}$};

				\node[state, right=of q_{m-1,1}] (q_{m,1}) {$q_{4,1}$};
				\node[state, below=of q_{m,1}] (q_{m,2}) {$q_{4,2}$};
				\node[state, below=of q_{m,2}] (q_{m,3}) {$q_{4,3}$};

				\node[state, below=of q_{3ff,3}](q_t) {$q_f$};

				\path[use as bounding box] (0, 0) rectangle (0, 0); 

				\path[->, every loop/.style={looseness=3}]
				(q_{2,1})
					edge[out=0, in=190, controls=+(180:1.2cm) and +(205:7.5cm)] (q_t)
					edge[densely dashed] (q_{2,2})
					edge[loop right,very thick,dotted] ()
				(q_{2,2})
					edge[densely dashed, out=0, in=180, controls=+(180:1.2cm) and +(205:6.85cm)] (q_t)
					edge (q_{2,3})
					edge[very thick,dotted,out=0,in=140] (q_t)
				(q_{2,3}) edge[very thick,dotted,out=330,in=200] (q_t)
					;
				
				\path[->, every loop/.style={looseness=3}] 
				(q_{1,1})
					edge[out=0, in=240, controls=+(180:2cm) and +(200:12.5cm)] (q_t) 
					edge[densely dashed] (q_{1,2})
					edge[loop right,very thick,dotted] () 
				(q_{1,2})
					edge[out=0, in=230, controls=+(180:2cm) and +(200:11.25cm)] (q_t)
					edge[densely dashed] (q_{1,3})
				edge [very thick,dotted, out=320, in=200] (q_t)
				(q_{1,3}) edge[very thick,dotted,out=330,in=200] (q_t)
					;

				\path[->, every loop/.style={looseness=3}]
				(q_{m-1,1})
					edge[densely dashed, out=180, in=350, controls=+(0:1.2cm) and +(335:7.5cm)] (q_t)
					edge (q_{m-1,2})
					edge[-,draw=white,line width=7pt,loop left] node[line width = 3pt,draw=white,fill=white, text=white, ultra thin] {$\sigma_{k'}$} ()
					edge[loop left,very thick,dotted] ()
				(q_{m-1,2})
					edge[bend right=10] (q_{m-1,3})
					edge[densely dashed, bend left=10] (q_{m-1,3})
				    edge[very thick,dotted,out=180,in=35] (q_t)
				(q_{m-1,3}) edge[very thick,dotted,out=180,in=35] (q_t)
					;
				
				\path[->, every loop/.style={looseness=3}]		
				(q_{m,1})
					edge[out=180,in=300, controls=+(0:2cm) and +(340:12.5cm)] (q_t)
					edge[densely dashed] (q_{m,2})
					edge[-,draw=white,line width=7pt,loop left] node[line width = 3pt,draw=white,fill=white, text=white, ultra thin] {$\sigma_{k'}$} ()
					edge[loop left,very thick,dotted]  ()
				(q_{m,2})
					edge[out=180,in=310, controls=+(0:2cm) and +(340:11.25cm)] (q_t)
					edge[densely dashed] (q_{m,3})
					edge[very thick,dotted,out=220,in=340] (q_t)
				(q_{m,3}) edge[very thick,dotted,out=200,in=340] (q_t)
				;

				\path[->, every loop/.style={in=45, out=75}] (q_t) edge[densely dashed, loop above] ();
				\path[->, every loop/.style={in=105, out=135}] (q_t) edge[loop above] ();
				\path[->, every loop/.style={very thick,dotted}] (q_t) edge[loop below] ();
				
				\path[->, every loop/.style={in=225, out=255}] (q_{1,3}) edge[densely dashed, loop below] ();
				\path[->, every loop/.style={in=225, out=255}] (q_{2,3}) edge[densely dashed, loop below] ();
				\path[->, every loop/.style={in=225, out=255}] (q_{m-1,3}) edge[densely dashed, loop below] ();
				\path[->, every loop/.style={in=225, out=255}] (q_{m,3}) edge[densely dashed, loop below] ();
				
				\path[->, every loop/.style={in=285, out=315}] (q_{1,3}) edge[loop above] ();
				\path[->, every loop/.style={in=285, out=315}] (q_{2,3}) edge[loop above] ();
				\path[->, every loop/.style={in=285, out=315}] (q_{m-1,3}) edge[loop above] ();
				\path[->, every loop/.style={in=285, out=315}] (q_{m,3}) edge[loop above] ();
			\end{tikzpicture}
		\vspace{11pt} 
	\caption{Example for the SAT formula
	$(x_1 \lor x_2) \land (x_1 \lor \neg x_2)
	\land (\neg x_1) \land (x_1 \lor x_2)$
	of the reduction used in the proof
	of Proposition~\ref{prop:np_complete_cases}
	for the constraint language $a(b+c)^*$.
	The transitions for $a$ are drawn with thick dotted lines,
	the transitions labeled by~$b$ are drawn by dashed lines and
	those labeled by~$c$ are drawn by solid lines.}
	\label{fig:a_b_plus_c}	
\end{figure}
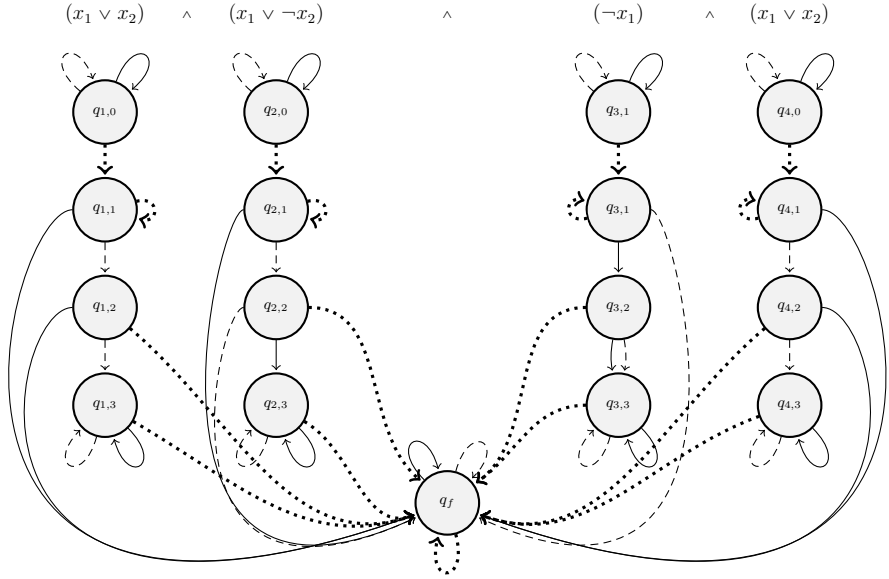
\begin{figure}
\centering
		    \begin{tikzpicture}[transform shape,scale=0.65,shorten >=1pt,node distance=2cm and 3.5cm,on grid,auto, state/.style={circle, draw, minimum size=1.3cm, fill=black!5, thick}] 
			
				\node (clause_1) {\large$(x_1 \vee x_2)$};
				\node[right=of clause_1] (clause_2) {\large $(x_1 \vee \neg x_2)$};
				\node[right=of clause_2] (clause_dots) {$\wedge$};
				\node[right=of clause_dots] (clause_{m-1}) {\large $(\neg x_1)$}; 
				\node[right=of clause_{m-1}] (clause_m) {\large $(x_1 \vee x_2)$};
				\node (label_wedge_1) at ($(clause_1)!0.48!(clause_2)$) {$\wedge$}; 
				\node (label_wedge_2) at ($(clause_2)!0.52!(clause_dots)$) {};
				\node (label_wedge_{m-2}) at ($(clause_dots)!0.46!(clause_{m-1})$) {};
				\node (label_wedge_{m-1}) at ($(clause_{m-1})!0.535!(clause_m)$) {$\wedge$};
				
				
				\node[below=of clause_dots] (q_{3ff,0}) {};
				\node[below=of q_{3ff,0}] (q_{3ff,1}) {};
				\node[below=of q_{3ff,1}] (q_{3ff,2}) {};
				\node[below=of q_{3ff,2}] (q_{3ff,3}) {};

				\node[state, left=of q_{3ff,0}] (q_{2,0}) {$q_{2,0}$};
				\node[state, below=of q_{2,0}] (q_{2,1}) {$q_{2,1}$};
				\node[state, below=of q_{2,1}] (q_{2,2}) {$q_{2,2}$};
				\node[state, below=of q_{2,2}] (q_{2,3}) {$q_{2,3}$};

				\node[state, left=of q_{2,0}] (q_{1,0}) {$q_{1,0}$};
				\node[state, left=of q_{2,1}] (q_{1,1}) {$q_{1,1}$};
				\node[state, below=of q_{1,1}] (q_{1,2}) {$q_{1,2}$};
				\node[state, below=of q_{1,2}] (q_{1,3}) {$q_{1,3}$};

				\node[state, right=of q_{3ff,0}] (q_{m-1,0}) {$q_{3,1}$};
				\node[state, below=of q_{m-1,0}] (q_{m-1,1}) {$q_{3,1}$};
				\node[state, below=of q_{m-1,1}] (q_{m-1,2}) {$q_{3,2}$};
				\node[state, below=of q_{m-1,2}] (q_{m-1,3}) {$q_{3,3}$};

				\node[state, right=of q_{m-1,0}] (q_{m,0}) {$q_{4,0}$};
				\node[state, below=of q_{m,0}] (q_{m,1}) {$q_{4,1}$};
				\node[state, below=of q_{m,1}] (q_{m,2}) {$q_{4,2}$};
				\node[state, below=of q_{m,2}] (q_{m,3}) {$q_{4,3}$};

				\node[state, below=of q_{3ff,3}](q_t) {$q_f$};

				\path[use as bounding box] (0, 0) rectangle (0, 0); 

				\path[->, every loop/.style={looseness=3}]
				(q_{2,0}) edge[very thick,dotted] (q_{2,1})
				(q_{1,0}) edge[very thick,dotted] (q_{1,1})
				(q_{m-1,0}) edge[very thick,dotted] (q_{m-1,1})
				(q_{m,0}) edge[very thick,dotted] (q_{m,1})
				;
				\path[->, every loop/.style={looseness=3}]
				(q_{2,1})
					edge[out=0, in=190, controls=+(180:1.2cm) and +(205:7.5cm)] (q_t)
					edge[densely dashed] (q_{2,2})
					edge[loop right,very thick,dotted] ()
				(q_{2,2})
					edge[densely dashed, out=0, in=180, controls=+(180:1.2cm) and +(205:6.85cm)] (q_t)
					edge (q_{2,3})
					edge[very thick,dotted,out=0,in=140] (q_t)
				(q_{2,3}) edge[very thick,dotted,out=330,in=200] (q_t)
					;
				
				\path[->, every loop/.style={looseness=3}] 
				(q_{1,1})
					edge[out=0, in=240, controls=+(180:2cm) and +(200:12.5cm)] (q_t) 
					edge[densely dashed] (q_{1,2})
					edge[loop right,very thick,dotted] () 
				(q_{1,2})
					edge[out=0, in=230, controls=+(180:2cm) and +(200:11.25cm)] (q_t)
					edge[densely dashed] (q_{1,3})
				edge [very thick,dotted, out=320, in=200] (q_t)
				(q_{1,3}) edge[very thick,dotted,out=330,in=200] (q_t)
					;

				\path[->, every loop/.style={looseness=3}]
				(q_{m-1,1})
					edge[densely dashed, out=180, in=350, controls=+(0:1.2cm) and +(335:7.5cm)] (q_t)
					edge (q_{m-1,2})
					edge[-,draw=white,line width=7pt,loop left] node[line width = 3pt,draw=white,fill=white, text=white, ultra thin] {$\sigma_{k'}$} ()
					edge[loop left,very thick,dotted] ()
				(q_{m-1,2})
					edge[bend right=10] (q_{m-1,3})
					edge[densely dashed, bend left=10] (q_{m-1,3})
				    edge[very thick,dotted,out=180,in=35] (q_t)
				(q_{m-1,3}) edge[very thick,dotted,out=180,in=35] (q_t)
					;
				
				\path[->, every loop/.style={looseness=3}]		
				(q_{m,1})
					edge[out=180,in=300, controls=+(0:2cm) and +(340:12.5cm)] (q_t)
					edge[densely dashed] (q_{m,2})
					edge[-,draw=white,line width=7pt,loop left] node[line width = 3pt,draw=white,fill=white, text=white, ultra thin] {$\sigma_{k'}$} ()
					edge[loop left,very thick,dotted]  ()
				(q_{m,2})
					edge[out=180,in=310, controls=+(0:2cm) and +(340:11.25cm)] (q_t)
					edge[densely dashed] (q_{m,3})
					edge[very thick,dotted,out=220,in=340] (q_t)
				(q_{m,3}) edge[very thick,dotted,out=200,in=340] (q_t)
				;

				\path[->, every loop/.style={in=45, out=75}] (q_t) edge[densely dashed, loop above] ();
				\path[->, every loop/.style={in=105, out=135}] (q_t) edge[loop above] ();
				\path[->, every loop/.style={very thick,dotted}] (q_t) edge[loop below] ();
				
				\path[->, every loop/.style={in=225, out=255}] (q_{1,3}) edge[densely dashed, loop below] ();
				\path[->, every loop/.style={in=225, out=255}] (q_{2,3}) edge[densely dashed, loop below] ();
				\path[->, every loop/.style={in=225, out=255}] (q_{m-1,3}) edge[densely dashed, loop below] ();
				\path[->, every loop/.style={in=225, out=255}] (q_{m,3}) edge[densely dashed, loop below] ();
				
				\path[->, every loop/.style={in=285, out=315}] (q_{1,3}) edge[loop above] ();
				\path[->, every loop/.style={in=285, out=315}] (q_{2,3}) edge[loop above] ();
				\path[->, every loop/.style={in=285, out=315}] (q_{m-1,3}) edge[loop above] ();
				\path[->, every loop/.style={in=285, out=315}] (q_{m,3}) edge[loop above] ();
				
			    \path[->, every loop/.style={in=110, out=140}] (q_{1,0}) edge[densely dashed, loop below] ();
				\path[->, every loop/.style={in=110, out=140}] (q_{2,0}) edge[densely dashed, loop below] ();
				\path[->, every loop/.style={in=110, out=140}] (q_{m-1,0}) edge[densely dashed, loop below] ();
				\path[->, every loop/.style={in=110, out=140}] (q_{m,0}) edge[densely dashed, loop below] ();
				
				\path[->, every loop/.style={in=40, out=70}] (q_{1,0}) edge[loop above] ();
				\path[->, every loop/.style={in=40, out=70}] (q_{2,0}) edge[loop above] ();
				\path[->, every loop/.style={in=40, out=70}] (q_{m-1,0}) edge[loop above] ();
				\path[->, every loop/.style={in=40, out=70}] (q_{m,0}) edge[loop above] ();
			\end{tikzpicture}
		\vspace{11pt} 
	\caption{A concrete example, for the SAT formula
	$(x_1 \lor x_2) \land (x_1 \lor \neg x_2)
	\land (\neg x_1) \land (x_1 \lor x_2)$,
	of the reduction used in the proof
	of Proposition~\ref{prop:np_complete_cases}
	for $(a+b)^*c(a+b)^*$.
	The transitions for the letter $c$ are drawn with dotted lines,
	the transitions labeled by~$a$ are drawn by dashed lines and
	those labeled by~$b$ are drawn by solid lines.}
	\label{fig:extended_Eppstein}	
\end{figure}
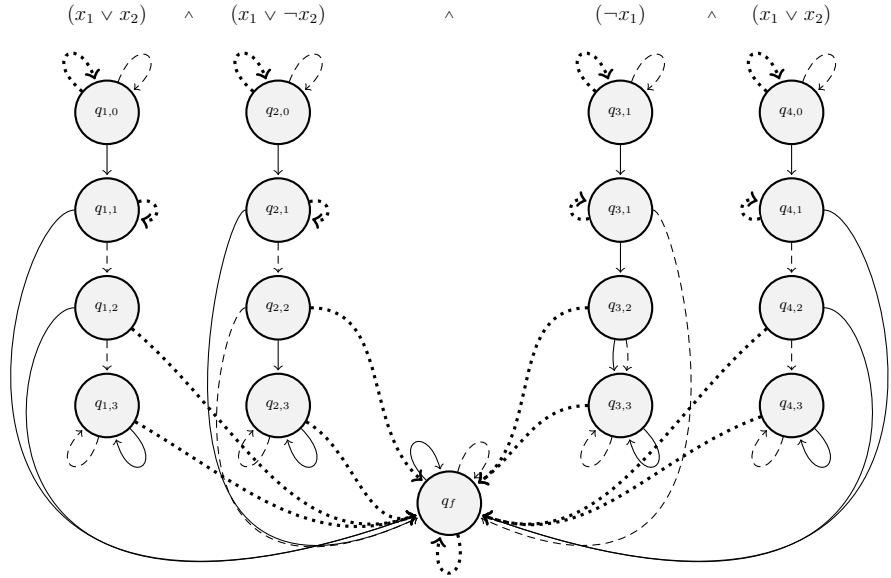
\begin{figure}
\centering
		    \begin{tikzpicture}[transform shape,scale=0.65,shorten >=1pt,node distance=2cm and 3.5cm,on grid,auto, state/.style={circle, draw, minimum size=1.3cm, fill=black!5, thick}] 
			
				\node (clause_1) {\large$(x_1 \vee x_2)$};
				\node[right=of clause_1] (clause_2) {\large $(x_1 \vee \neg x_2)$};
				\node[right=of clause_2] (clause_dots) {$\wedge$};
				\node[right=of clause_dots] (clause_{m-1}) {\large $(\neg x_1)$}; 
				\node[right=of clause_{m-1}] (clause_m) {\large $(x_1 \vee x_2)$};
				\node (label_wedge_1) at ($(clause_1)!0.48!(clause_2)$) {$\wedge$}; 
				\node (label_wedge_2) at ($(clause_2)!0.52!(clause_dots)$) {};
				\node (label_wedge_{m-2}) at ($(clause_dots)!0.46!(clause_{m-1})$) {};
				\node (label_wedge_{m-1}) at ($(clause_{m-1})!0.535!(clause_m)$) {$\wedge$};
				
				
				\node[below=of clause_dots] (q_{3ff,0}) {};
				\node[below=of q_{3ff,0}] (q_{3ff,1}) {};
				\node[below=of q_{3ff,1}] (q_{3ff,2}) {};
				\node[below=of q_{3ff,2}] (q_{3ff,3}) {};

				\node[state, left=of q_{3ff,0}] (q_{2,0}) {$q_{2,0}$};
				\node[state, below=of q_{2,0}] (q_{2,1}) {$q_{2,1}$};
				\node[state, below=of q_{2,1}] (q_{2,2}) {$q_{2,2}$};
				\node[state, below=of q_{2,2}] (q_{2,3}) {$q_{2,3}$};

				\node[state, left=of q_{2,0}] (q_{1,0}) {$q_{1,0}$};
				\node[state, left=of q_{2,1}] (q_{1,1}) {$q_{1,1}$};
				\node[state, below=of q_{1,1}] (q_{1,2}) {$q_{1,2}$};
				\node[state, below=of q_{1,2}] (q_{1,3}) {$q_{1,3}$};

				\node[state, right=of q_{3ff,0}] (q_{m-1,0}) {$q_{3,1}$};
				\node[state, below=of q_{m-1,0}] (q_{m-1,1}) {$q_{3,1}$};
				\node[state, below=of q_{m-1,1}] (q_{m-1,2}) {$q_{3,2}$};
				\node[state, below=of q_{m-1,2}] (q_{m-1,3}) {$q_{3,3}$};

				\node[state, right=of q_{m-1,0}] (q_{m,0}) {$q_{4,0}$};
				\node[state, below=of q_{m,0}] (q_{m,1}) {$q_{4,1}$};
				\node[state, below=of q_{m,1}] (q_{m,2}) {$q_{4,2}$};
				\node[state, below=of q_{m,2}] (q_{m,3}) {$q_{4,3}$};

				\node[state, below=of q_{3ff,3}](q_t) {$q_f$};

				\path[use as bounding box] (0, 0) rectangle (0, 0); 

				\path[->, every loop/.style={looseness=3}]
				(q_{2,0}) edge (q_{2,1})
				(q_{1,0}) edge (q_{1,1})
				(q_{m-1,0}) edge (q_{m-1,1})
				(q_{m,0}) edge (q_{m,1})
				;
				\path[->, every loop/.style={looseness=3}]
				(q_{2,1})
					edge[out=0, in=190, controls=+(180:1.2cm) and +(205:7.5cm)] (q_t)
					edge[densely dashed] (q_{2,2})
					edge[loop right,very thick,dotted] ()
				(q_{2,2})
					edge[densely dashed, out=0, in=180, controls=+(180:1.2cm) and +(205:6.85cm)] (q_t)
					edge (q_{2,3})
					edge[very thick,dotted,out=0,in=140] (q_t)
				(q_{2,3}) edge[very thick,dotted,out=330,in=200] (q_t)
					;
				
				\path[->, every loop/.style={looseness=3}] 
				(q_{1,1})
					edge[out=0, in=240, controls=+(180:2cm) and +(200:12.5cm)] (q_t) 
					edge[densely dashed] (q_{1,2})
					edge[loop right,very thick,dotted] () 
				(q_{1,2})
					edge[out=0, in=230, controls=+(180:2cm) and +(200:11.25cm)] (q_t)
					edge[densely dashed] (q_{1,3})
				edge [very thick,dotted, out=320, in=200] (q_t)
				(q_{1,3}) edge[very thick,dotted,out=330,in=200] (q_t)
					;

				\path[->, every loop/.style={looseness=3}]
				(q_{m-1,1})
					edge[densely dashed, out=180, in=350, controls=+(0:1.2cm) and +(335:7.5cm)] (q_t)
					edge (q_{m-1,2})
					edge[-,draw=white,line width=7pt,loop left] node[line width = 3pt,draw=white,fill=white, text=white, ultra thin] {$\sigma_{k'}$} ()
					edge[loop left,very thick,dotted] ()
				(q_{m-1,2})
					edge[bend right=10] (q_{m-1,3})
					edge[densely dashed, bend left=10] (q_{m-1,3})
				    edge[very thick,dotted,out=180,in=35] (q_t)
				(q_{m-1,3}) edge[very thick,dotted,out=180,in=35] (q_t)
					;
				
				\path[->, every loop/.style={looseness=3}]		
				(q_{m,1})
					edge[out=180,in=300, controls=+(0:2cm) and +(340:12.5cm)] (q_t)
					edge[densely dashed] (q_{m,2})
					edge[-,draw=white,line width=7pt,loop left] node[line width = 3pt,draw=white,fill=white, text=white, ultra thin] {$\sigma_{k'}$} ()
				    edge[loop left,very thick,dotted] ()
				(q_{m,2})
					edge[out=180,in=310, controls=+(0:2cm) and +(340:11.25cm)] (q_t)
					edge[densely dashed] (q_{m,3})
					edge[very thick,dotted,out=220,in=340] (q_t)
				(q_{m,3}) edge[very thick,dotted,out=200,in=340] (q_t)
				;

				\path[->, every loop/.style={in=45, out=75}] (q_t) edge[densely dashed, loop above] ();
				\path[->, every loop/.style={in=105, out=135}] (q_t) edge[loop above] ();
				\path[->, every loop/.style={very thick,dotted}] (q_t) edge[loop below] ();
				
				\path[->, every loop/.style={in=225, out=255}] (q_{1,3}) edge[densely dashed, loop below] ();
				\path[->, every loop/.style={in=225, out=255}] (q_{2,3}) edge[densely dashed, loop below] ();
				\path[->, every loop/.style={in=225, out=255}] (q_{m-1,3}) edge[densely dashed, loop below] ();
				\path[->, every loop/.style={in=225, out=255}] (q_{m,3}) edge[densely dashed, loop below] ();
				
				\path[->, every loop/.style={in=285, out=315}] (q_{1,3}) edge[loop above] ();
				\path[->, every loop/.style={in=285, out=315}] (q_{2,3}) edge[loop above] ();
				\path[->, every loop/.style={in=285, out=315}] (q_{m-1,3}) edge[loop above] ();
				\path[->, every loop/.style={in=285, out=315}] (q_{m,3}) edge[loop above] ();
				
			    \path[->, every loop/.style={in=110, out=140}] (q_{1,0}) edge[very thick,dotted, loop below] ();
				\path[->, every loop/.style={in=110, out=140}] (q_{2,0}) edge[very thick,dotted, loop below] ();
				\path[->, every loop/.style={in=110, out=140}] (q_{m-1,0}) edge[very thick,dotted, loop below] ();
				\path[->, every loop/.style={in=110, out=140}] (q_{m,0}) edge[very thick,dotted, loop below] ();
				
				\path[->, every loop/.style={in=40, out=70}] (q_{1,0}) edge[densely dashed,loop above] ();
				\path[->, every loop/.style={in=40, out=70}] (q_{2,0}) edge[densely dashed,loop above] ();
				\path[->, every loop/.style={in=40, out=70}] (q_{m-1,0}) edge[densely dashed,loop above] ();
				\path[->, every loop/.style={in=40, out=70}] (q_{m,0}) edge[densely dashed,loop above] ();
			\end{tikzpicture}
		\vspace{11pt} 
	\caption{Example for 
	$(x_1 \lor x_2) \land (x_1 \lor \neg x_2)
	\land (\neg x_1) \land (x_1 \lor x_2)$
	of the reduction used in the proof
	of Proposition~\ref{prop:np_complete_cases}
	for $(a+b)^*c(b+c)^*$.
	The $a$-transitions are drawn with dotted lines,
	the $b$-transitions by solid lines and
	the $c$-transitions by dashed lines. The self-loops at the $q_{i,1}$'s could also be changed to transitions ending in $q_f$.}
	\label{fig:a_plus_b_c_b_plus_c}	
\end{figure}
\end{toappendix}

\begin{propositionrep}
\label{prop:np_complete_cases}
 For the following
 constraint languages, 
 the constrained synchronization problem
 for weakly acyclic automata is $\NP$-hard:
 $$
  \begin{array}{llll}
    a(b+c)^*        & (a+b+c)(a+b)^*  & (a+b)(a+c)^*  \\
    (a+b)^*c(a+b)^* & a^*b(a+c)^*     &  a^*(b+c)(a+b)^* \\
    a^*b(b+c)^*     & (a+b)^*c(b+c)^* & a^*(b+c)(b+c)^*.
  \end{array}
 $$ 
\end{propositionrep}
\begin{proofsketch}
 We only sketch the case $L(\mathcal B) = (a+b)^*c(b+c)^*$, the other cases could be handled similarly.
 We adapt a reduction
 by Eppstein and Rystsov~\cite{Epp90,Rys80}
 to show $\NP$-hardness for the decision variant of the problem of a shortest synchronizing word.
 Given a SAT instance with variables $X = \{x_1, \ldots, x_n\}$
 and clauses $C = \{c_1, \ldots, c_m\}$,
 we construct a weakly acyclic automaton $\mathcal A = (\Sigma, Q, \delta)$ over the alphabet $\{a,b,c\}$
 with states $q_{i,j}$ for $1 \le i \le m$ and $0 \le j \le n + 1$, plus
 a sink state $q_f$.
 Then $\delta$ is defined, for $i \in \{1,\ldots,m\}$ and $j \in \{1,\ldots,n\}$, as
 \[
  \delta(q_{i,j}, b) = \left\{ 
  \begin{array}{ll}
   q_{i, j + 1} & \mbox{if }  \overline x_j \in c_i \lor \{x_j, \overline x_j\}\cap c_i = \emptyset; \\ 
   q_f          & \mbox{if } x_j \in c_i;
  \end{array}
  \right.
 \] 
 and, symmetrically,
 \[
   \delta(q_{i,j}, c) = \left\{ 
  \begin{array}{ll}
   q_{i, j + 1} & \mbox{if } x_j \in c_i \lor \{x_j, \overline x_j\}\cap c_i = \emptyset; \\ 
   q_f          & \mbox{if } \overline x_j \in c_i.
  \end{array}
  \right.
 \] 
 Furthermore, for $i \in \{1,\ldots,m\}$ and $j \in \{0,\ldots,n+1\}$,
 \[
  \delta(q_{i, j}, a) = \left\{ 
  \begin{array}{ll}
   q_{i,j} & \mbox{if } j \in \{0,1\}; \\ 
   q_f     & \mbox{if } j \notin \{0,1\}.
  \end{array}
  \right.
 \]
 Lastly, for $i \in \{1,\ldots,m\}$, we set
 $\delta(q_{i, n+1}, b) = \delta(q_{i, n+1}, c) = q_{i,n+1}$, $\delta(q_{i, 0}, b) =  q_{i,0}$,
 $\delta(q_{i, 0}, c) = q_{i,1}$
 and $q_f = \delta(q_f, a) = \delta(q_f, b) = \delta(q_f, c)$.
 Note that we have $\{ q_{1,1} , \ldots, q_{m,1} \} \subseteq \delta(Q, uc)$
 for any $u \in \{a,b\}^*$ and, for $v \in \{b,c\}^*$, $\delta(q_{i,1}, v) = q_f$
 if and only if some symbol in $v$ at a position smaller or equal than $n$
 branches out of the strand $q_{i,1}, \cdots, q_{i,n}$, which means
 $v$ could be identified with a satisfying assignment for the clause $c_i$.
 Conversely, if we have a satisfying assignment, construct
 a word $v = v_1 \cdots v_n \in \{b,c\}^*$ by setting $v_i = b$
 if the $i$-th variable is set to one, and $v_i = c$ otherwise.
 Then, $\delta(Q, acv) =  \{q_f\}$. So, we can show that $\mathcal A$ has a synchronizing
 word in $L(\mathcal B)$ if and only
 if there exists a satisfying assignment for all
 clauses in $C$.~\qed
\end{proofsketch}
\begin{proof}
 The following reductions are all adaptions of a reduction from Eppstein and Rystsov~\cite{Epp90,Rys80}
 to show $\NP$-hardness for the decision variant of the problem of a shortest synchronizing word.
 The reduction works by constructing for every clause a linear path.
 The states on these paths correspond to the variables, and we can only leave this
 path and end in a synchronizing sink state at those states whose corresponding literals
 are contained in the clause associated to the path.
 The additional letter is used, in some sense, to reset all states either to the start states of the paths
 or to the sink state.

 \begin{enumerate}
 \item \label{case:a_b_plus_c}
  The constraint language $a(b+c)^*$.
  The adaption is similar to a reduction used in~\cite[Theorem 4]{Ryzhikov19a}.

  \medskip 
  
  \underline{Construction:} Given a SAT instance with variables $X = \{x_1, \ldots, x_n\}$
 and clauses $C = \{c_1, \ldots, c_m\}$,
 we construct a weakly acyclic automaton $\mathcal A = (\Sigma, Q, \delta)$ over the alphabet $\{a,b,c\}$
 with states $q_{i,j}$ for $1 \le i \le m$ and $1 \le j \le n + 1$, plus
 a sink state $q_f$.
 The transition function $\delta$ is defined, for $i \in \{1,\ldots,m\}$
 and $j \in \{1,\ldots,n\}$,
 as
 \[
  \delta(q_{i,j}, b) = \left\{ 
  \begin{array}{ll}
   q_{i, j + 1} & \mbox{if } x_j \in c_i \lor \{x_j, \overline x_j\}\cap c_i = \emptyset; \\ 
   q_f          & \mbox{if } \overline x_j \in c_i;
  \end{array}
  \right.
 \] 
 and, symmetrically,
 \[
   \delta(q_{i,j}, c) = \left\{ 
  \begin{array}{ll}
   q_{i, j + 1} & \mbox{if } \overline x_j \in c_i \lor \{x_j, \overline x_j\}\cap c_i = \emptyset; \\ 
   q_f          & \mbox{if } x_j \in c_i.
  \end{array}
  \right.
 \] 
 Furthermore, for $i \in \{1,\ldots,m\}$ and $j \in \{1,\ldots,n+1\}$,
 we set
 \[
  \delta(q_{i, j}, a) = \left\{ 
  \begin{array}{ll}
   q_{i,1} & \mbox{if } j = 1; \\ 
   q_f     & \mbox{if } j \ne 0.
  \end{array}
  \right.
 \]
 Lastly, for $i \in \{1,\ldots,m\}$, we set 
 $\delta(q_{i, n+1}, b) = \delta(q_{i, n+1}, c) = q_{i,n+1}$,
 and $q_f = \delta(q_f, a) = \delta(q_f, b) = \delta(q_f, c)$.
 See Figure~\ref{fig:a_b_plus_c} for an example of the reduction.
 
 \medskip

 \underline{Verification:} Suppose $\mathcal A$ has a synchronizing word $w \in L(B)$.
 Then, $w = au$ with $u \in \{b,c\}^*$.
 As $q_f$ is a sink state, we have $\delta(Q, w) = \{q_f\}$.
 Consider a state $q_{i,1}$. As $a$ labels
 a self-loop on this state, we have $\delta(q_{i,1}, u) = s_f$.
 However,  $\delta(q_{i,1}, u) = s_f$ holds precisely if $u$
 has a prefix $v\in \{b,c\}^n$
 with $v = v_1 \cdots v_n$ for $\{v_1, \ldots, v_n\}\subseteq \Sigma$
 such that for any clause $c_i$ there exists $r \in \{1,\ldots, n\}$
 with $v_r = b$ if $\overline x_r \in c_i$
 or $v_r = c$ if $x_r \in c_i$.
 Hence, we get a satisfying assignment for all clauses
 by setting $x_j = 0$ if $v_j = b$
 and $x_j = 1$ if $v_j = c$ for $j \in \{1,\ldots,n\}$.

 Conversely, suppose  a satisfying assignment
 for the variables $x_1, \ldots, x_n$ exists.
 Then, set $u = u_1 \cdots u_n$
 with $u_j = b$ if $x_j = 0$
 and $u_j = c$ if $x_j = 1$ for $j \in \{1,\ldots,n\}$.
 Then, we have $\delta(q_{i,1}, u) = s_f$
 for $i \in \{1,\ldots,m\}$.
 As $\delta(Q, a) = \{ q_{1,1}, \ldots, q_{m,1}, q_f\}$,
 we find $\delta(Q, au) = \{s_f\}$.

 \medskip

 \item The constraint language $(a+b+c)(a+b)^*$.
 \label{case:a_b_c_a_plus_b}
  We can use the same reduction as 
  in Case~\ref{case:a_b_plus_c}, but
  with the letters changed: the letter $c$
  acts like the letter $a$ in Case~\ref{case:a_b_plus_c}
  and the letter $a$ like the letter $c$ before.
  Note that we can use the letter $c$ if we have a satisfying assignment
  for the given SAT formula, but not conversely. However, by only 
  investigating the paths taken from the start states $q_{i,1}$,
  we can read of a satisfying assignment.
  
 \item The constraint language $(a+b)(a+c)^*$.
  We can use the same reduction as 
  in Case~\ref{case:a_b_plus_c}, but
  with the letters changed: the letter $b$
  acts like the letter $a$ in Case~\ref{case:a_b_plus_c}
  and the letter $a$ like the letter $b$ before.
  Note that we can use the letter $b$ if we have a satisfying assignment
  for the given SAT formula, but we do not need to use it in in the other
  direction of the reduction. However, by only 
  investigating the paths taken from the start states $q_{i,1}$,
  we can read of a satisfying assignment.
 
 \item \label{case:a_plus_b_c_a_plus_b} The constraint language $(a+b)^*c(a+b)^*$.
  %
  %
  %
  Here, we need a different construction.
  The reason is that the special letter that is only
  allowed to be read once, which was the letter $a$
  in Case~\ref{case:a_b_plus_c} and is the letter $c$ here, had, in the previous cases,
  the property
  that before it, it was not possible to read
  anything, i.e., if it appeared in a word of the constraint
  language, it 
  appeared as the first letter of that word.
  Without this property, for example
  in the case we are considering now, we can read in
  any word long enough to drive everything into 
  the last states of the strands corresponding to the clauses,
  i.e., the states $q_{i,n+1}$ in the reduction of Case~\ref{case:a_b_plus_c}.
  Then, from this state we could read the special letter to map them to the
  the sink state.

  We can circumvent this by introducing additional state $q_{i,0}$.
  More formally, we give the complete construction next.
  
  \medskip

  \underline{Construction:} Given a SAT instance with variables $X = \{x_1, \ldots, x_n\}$
 and clauses $C = \{c_1, \ldots, c_m\}$,
 we construct a weakly acyclic automaton $\mathcal A = (\Sigma, Q, \delta)$ over the alphabet $\{a,b,c\}$
 with states $q_{i,j}$ for $1 \le i \le m$ and $0 \le j \le n + 1$, plus
 a sink state $q_f$.
 The transition function $\delta$ is defined 
 as, for $i \in \{1,\ldots,m\}$ and $j \in \{1,\ldots,n\}$,
 \[
  \delta(q_{i,j}, a) = \left\{ 
  \begin{array}{ll}
   q_{i, j + 1} & \mbox{if } x_j \in c_i \lor \{x_j, \overline x_j\}\cap c_i = \emptyset; \\ 
   q_f          & \mbox{if } \overline x_j \in c_i;
  \end{array}
  \right.
 \] 
 and, symmetrically,
 \[
   \delta(q_{i,j}, b) = \left\{ 
  \begin{array}{ll}
   q_{i, j + 1} & \mbox{if } \overline x_j \in c_i \lor \{x_j, \overline x_j\}\cap c_i = \emptyset; \\ 
   q_f          & \mbox{if } x_j \in c_i.
  \end{array}
  \right.
 \] 
 Furthermore, for $i \in \{1,\ldots,m\}$ and $j \in \{1,\ldots,n+1\}$,
 \[
  \delta(q_{i, j}, c) = \left\{ 
  \begin{array}{ll}
   q_{i,1} & \mbox{if } j = 1; \\ 
   q_f     & \mbox{if } j \ne 0.
  \end{array}
  \right.
 \]
 Lastly, for $i \in \{1,\ldots,m\}$, we set
 $\delta(q_{i, n+1}, a) = \delta(q_{i, n+1}, b) = q_{i,n+1}$
 and
 \[
  \delta(q_{i, 0}, a) = \delta(q_{i,0}, b) = q_{i,0}, \quad
  \delta(q_{i, 0}, c) = q_{i,1}
 \] 
 and $q_f = \delta(q_f, a) = \delta(q_f, b) = \delta(q_f, c)$.
 
 \medskip 
 
 \underline{Verification:} Suppose $\mathcal A$ has a synchronizing word $w \in L(B)$.
 As $q_f$ is a sink state, we have $\delta(Q, w) = \{q_f\}$.
 Write $w = ucv$ with $u \in \{a,b\}^*$ and $v \in \{a,b\}^*$.
 For $i \in \{1,\ldots,m\}$, we have $\delta(q_{i,0}, uc) = q_{i,1}$.
 Hence, we must have $\delta(q_{i,1}, v) = s_f$.
 By construction, this is the case precisely
 if $v$ has a prefix $v' \in \{a,b\}^n$
 with $v' = v'_1 \cdots v'_n$ for $\{v'_1, \ldots, v'_n\}\subseteq \Sigma$
 such that for any clause $c_i$ there exists $r \in \{1,\ldots, n\}$
 with $v'_r = a$ if $\overline x_r \in c_i$
 or $v'_r = b$ if $x_r \in c_i$.
 Hence, we get a satisfying assignment for all clauses
 by setting $x_j = 0$ if $v'_j = a$
 and $x_j = 1$ if $v'_j = b$ for all $j \in \{1,\ldots,n\}$.

 Conversely, suppose  a satisfying assignment
 for the variables $x_1, \ldots, x_n$ exists.
 Then, set $u = u_1 \cdots u_n$
 with $u_j = a$ if $x_j = 0$
 and $u_j = b$ if $x_j = 1$ for $j \in \{1,\ldots,n\}$.
 Then, $\delta(q_{i,1}, u) = s_f$ for any $i \in \{1,\ldots,n\}$.
 Furthermore, we have $\delta(Q, c) = \{ q_{1,1}, \ldots, q_{m,1}, q_f \}$.
 Hence, $\delta(Q, cu) = \{s_f\}$.

  \medskip

 \item The constraint language $a^*b(a+c)^*$. We can use the same
 reduction as in Case~\ref{case:a_plus_b_c_a_plus_b}, but with
 the letter $b$ acting like the letter $c$ in Case~\ref{case:a_plus_b_c_a_plus_b}
 and the letter $c$ like the letter $b$ before.
 
 \item The constraint language $a^*(b+c)(a+b)^*$. Here, we can use the same
 reduction as in Case~\ref{case:a_plus_b_c_a_plus_b}. Note
 that the construction enforces that we have to use a word that uses
 the letter $c$.
 
 \item The constraint language $a^*b(b+c)^*$.
  We can use the same
 reduction as in Case~\ref{case:a_plus_b_c_a_plus_b}, but with
 the letter $b$ acting like the letter $c$ in Case~\ref{case:a_plus_b_c_a_plus_b}
 and the letter $c$ like the letter $b$ before.
 
 \item The constraint language $(a+b)^*c(b+c)^*$.  Here, the letter $a$
  will be the letter that is used as a special letter. However, we can not use the previous
  constructions, but use another one. The construction was already given in the main 
  text. Also, see Figure~\ref{fig:a_plus_b_c_b_plus_c}
  for an example of the reduction.
  
  \medskip
  
 \underline{Verification:} Suppose $\mathcal A$ has a synchronizing word $w \in L(B)$.
 As $q_f$ is a sink state, we have $\delta(Q, w) = \{q_f\}$.
 Write $w = ucv$ with $u \in \{a,b\}^*$ and $v \in \{b,c\}^*$.
 For $i \in \{1,\ldots,m\}$, we have $\delta(q_{i,0}, uc) = q_{i,1}$.
 Hence, we must have $\delta(q_{i,1}, v) = s_f$.
 By construction, this is the case precisely
 if $v$ has a prefix $v' \in \{b,c\}^{n'}$, $n' \le n$,
 with $v' = v'_1 \cdots v'_{n'}$ for $\{v'_1, \ldots, v'_{n'}\}\subseteq \Sigma$
 such that for any clause $c_i$ there exists $r \in \{1,\ldots, n\}$
 with $v'_r = b$ if $ x_r \in c_i$
 or $v'_r = c$ if $\overline x_r \in c_i$.
 Hence, we get a satisfying assignment for all clauses
 by setting $x_j = 1$ if $v'_j = b$
 and $x_j = 0$ if $v'_j = c$ for all $j \in \{1,\ldots,n'\}$,
 and setting the remaining variables to arbitrary values.

 Conversely, suppose  a satisfying assignment
 for the variables $x_1, \ldots, x_n$ exists.
 Then, set $u = u_1 \cdots u_n$
 with $u_j = b$ if $x_j = 1$
 and $u_j = c$ if $x_j = 0$ for $j \in \{1,\ldots,n\}$.
 Then, $\delta(q_{i,1}, u) = s_f$ for any $i \in \{1,\ldots,n\}$.
 Furthermore, we have $\delta(Q, ac) = \{ q_{1,1}, \ldots, q_{m,1}, q_f \}$.
 Hence, $\delta(Q, acu) = \{s_f\}$.
 
 \medskip 
 
 \item The constraint language $a^*(b+c)(b+c)^*$.
   We can use the same reduction as 
  in Case~\ref{case:a_b_plus_c}.
     
 \end{enumerate}
 So, we have handled all constraint languages from the statement
 and the proof is done.~\qed
\end{proof}

In the next two propositions, we handle those cases from the list given
in Theorem~\ref{thm:classification_MFCS_paper} that do not appear in Proposition~\ref{prop:np_complete_cases}.
It will turn out that for these cases, the complexity drops from $\PSPACE$-completeness
to polynomial time solvable.

\begin{propositionrep}
\label{prop:a_plus_b_c}
We have $((a+b)^*c)\textsc{-WAA-Constr-Sync} \in \PTIME$
\end{propositionrep}
\begin{proofsketch}
 By Lemma~\ref{lem:sync_waa_has_sink_state}, if $\mathcal A$ is synchronizable, it must possess
 a unique synchronizing sink state $s_f$. In that case, set $T = \delta^{-1}(s_f, c)$.
 Then, we have a synchronizing word in $(a+b)^*c$
 if and only if there exists a word $w \in (a+b)^*$
 such that $\delta_{|\{a,b\}}(Q, w) \subseteq T$ in $\mathcal A_{|\{a,b\}} = (\Sigma, Q, \delta_{|\{a,b\}})$.
 The latter problem is in~\PTIME\ by Theorem~\ref{thm:sync_into_subset_waa_in_P}.
 ~\qed
\end{proofsketch}
\begin{proof}
 By Lemma~\ref{lem:sync_waa_has_sink_state}, if $\mathcal A$ is synchronizable, it must possess
 a unique synchronizing sink state $s_f$, which is easily testable.
 So, if it does not possess a unique sink state, it could not have
 a synchronizing word in $(a+b)^*c$.
 If it has such a unique sink state $s_f$, set $T = \delta^{-1}(s_f, c)$.
 Then, we have a synchronizing word in $(a+b)^*c$
 if and only if there exists a word $w \in (a+b)^*$
 such that $\delta_{|\{a,b\}}(Q, w) \subseteq T$ in $\mathcal A_{|\{a,b\}} = (\Sigma, Q, \delta_{|\{a,b\}})$.
 The latter problem is solvable in polynomial time by Theorem~\ref{thm:sync_into_subset_waa_in_P}
 and we have a polynomial time procedure for our original problem.~\qed
\end{proof}

\begin{propositionrep}
\label{prop:a_plus_b_c_and_more}
 We have $((a+b)^*ca^*)\textsc{-WAA-Constr-Sync} \in \PTIME$
 and $((a+b)^*cc^*)\textsc{-WAA-Constr-Sync} \in \PTIME$.
\end{propositionrep}
\begin{proofsketch}
 By Lemma~\ref{lem:sync_waa_has_sink_state}, the automaton $\mathcal A$
 could only be synchronizing if it has a unique sink state $s_f$.
 In this case, set $S_i = \delta^{-1}(s_f, a^i)$ and $n = |Q|$.
 We have $S_i = S_n$ for any $i \ge n$.
 Then, for each $i \in \{0, \ldots, n\}$, set $T_i = \delta^{-1}(S_i, c)$
 and decide, which could be done in polynomial time by Theorem~\ref{thm:sync_into_subset_waa_in_P},
 if there exists a word $w \in \{a,b\}^*$ in $A_{|\{a,b\}} = (\{a,b\}, Q, \delta_{|\{a,b\}})$
 such that $\delta_{|\{a,b\}}(Q, w) \subseteq T_i$, which
 is equivalent to $\delta(Q, wca^i) = \{s_f\}$.~\qed
\end{proofsketch}
\begin{proof} 
 By Lemma~\ref{lem:sync_waa_has_sink_state}, the automaton $\mathcal A$
 could only be synchronizing if it has a unique sink state $s_f$.
 Consider the sets $S_i = \delta^{-1}(s_f, a^i)$.
 As every path in $\mathcal A$ has length at most $|Q|$ and the only loops are self-loops,
 we have 
 \[ 
  \{ S_i : i \ge 0 \} = \{ S_i : i \in \{0, \ldots, |Q|\} \}.
 \]
 More specifically, as paths of length $|Q|$ or more induce a (self-)loop,
 we have $\delta(q, a^{|Q|+1}) = s_f$
 if and only if $\delta(q, a^{|Q|}) = s_f$, which yields $S_{|Q| + 1} = S_{|Q|}$
 and so, inductively, $S_i = S_{|Q|}$ for all $i \ge |Q|$.
 Then, for each $i \in \{0, \ldots, |Q|\}$, set $T_i = \delta^{-1}(S_i, c)$
 and decide if there exists a word $w \in \{a,b\}^*$ in $A_{|\{a,b\}} = (\{a,b\}, Q, \delta_{|\{a,b\}})$
 such that $\delta_{|\{a,b\}}(Q, w) \subseteq T_i$.
 By Theorem~\ref{thm:sync_into_subset_waa_in_P}, the last step could be done in polynomial time
 and as we only have to perform this step $|Q| + 1$ many times
 and the sets $T_i$ and $S_i$ are computable in polynomial time, 
 the whole procedure runs in polynomial time.~\qed
\end{proof}

Combining the results of this section, we can give a precise classification
of the complexity landscape for the problem with weakly acyclic input automata
and when the constraint automaton\footnote{Recall that the constraint automaton is a partial automaton, whereas 
the input \mbox{(semi-)auto}maton is always complete.}
has at most two states over a ternary alphabet.

\begin{theoremrep}
\label{thm:classification_waa}
 Let $\mathcal B = (\Sigma, P, \mu, p_0, F)$ be a PDFA. If $|P| \le 1$ or $|P| = 2$
 and $|\Sigma| \le 2$, then $L(\mathcal B)\textsc{-WAA-Constr-Sync}\in \PTIME$.
 For $|P| = 2$ with $|\Sigma| = 3$, up to symmetry by renaming of the letters, 
 $L(\mathcal B)\textsc{-WAA-Constr-Sync}$ is $\NP$-complete precisely
 for the cases listed in Proposition~\ref{prop:np_complete_cases}
 and in $\PTIME$ otherwise.
\end{theoremrep}
\begin{proof}
 By Theorem~\ref{thm:classification_MFCS_paper},
 for at most two states and a binary alphabet, the problem is always in $\PTIME$.
 Up to symmetry,
 for $|P| = 2$ and $|\Sigma| = 3$
 we only have to check the cases listed in Theorem~\ref{thm:classification_MFCS_paper},
 as for the other ones it is in $\PTIME$ 
 for general input automata.
 Except for the cases the cases
 \[
 (a+b)^*c, \quad 
(a+b)^*ca^* \mbox{ and }
(a+b)^*cc^*,
 \]
 these are all listed in Proposition~\ref{prop:np_complete_cases}.
 And for the cases from Theorem~\ref{thm:classification_MFCS_paper}
 not appearing in Proposition~\ref{prop:np_complete_cases}
 written above we have polynomial time solvable
 problems by Proposition~\ref{prop:a_plus_b_c}
 and Proposition~\ref{prop:a_plus_b_c_and_more}.
 Containment in $\NP$ is stated in Theorem~\ref{thm:constraint_sync_weakly_acyclic_in_NP}.~\qed
\end{proof}

\section{Relation to Automata with TTSPL Automaton Graphs}

In~\cite{DBLP:conf/ncma/BruchertseiferF19,DBLP:conf/tamc/BruchertseiferF20} the decision problem related to minimal synchronizing
words was investigated for TTSPL automata. These are automata whose automaton graph, i.e., the multigraph resulting after forgetting about the labels, is a
TTSPL graph, i.e, a two-terminal series-parallel graph with a start and sink node and where self-loops are allowed.

In the context of automata theory, such automata were originally studied in connection
with the size of resulting regular expressions, i.e., motivated by questions
on the descriptional complexity of formal languages~\cite{DBLP:journals/mst/Gulan13}.

Many problems for series-parallel graphs are computationally easy~\cite{DBLP:journals/iandc/Eppstein92}, which partly
motivated the aforementioned studies~\cite{DBLP:conf/ncma/BruchertseiferF19,DBLP:conf/tamc/BruchertseiferF20}. However, from a fixed parameter complexity perspective, for most parameters, synchronization problems
remain hard on the corresponding automata class~\cite{DBLP:conf/ncma/BruchertseiferF19,DBLP:conf/tamc/BruchertseiferF20}.

We will not give all the definitions, but refer the interested reader to the aforementioned papers.
We only mention in passing that TTSPL automata form a proper subclass of the weakly acyclic automata.
Also, by employing a similar construction as used in~\cite[Proposition 4.1]{DBLP:conf/ncma/BruchertseiferF19},
i.e.,
introducing two additional letters, an additional starting state and some auxiliary states
to realize several paths from the start state by a tree-like structure to the starting
states of the paths corresponding to the clauses in the reduction,
we can alter the reduction from Proposition~\ref{prop:np_complete_cases}
to yield a TTSPL graph.  However, we can even do better
and note that for the reductions used in Proposition~\ref{prop:np_complete_cases},
we do not need additional letters, but can realize the branching from the additional starting state 
with two existing letters and use
a third letter to map the additional states to the sink state.  The resulting automaton is a TTSPL automaton,
for example the transitions going directly to the sink state arise out of parallel compositions.
Hence, we can even state the following.

\begin{theorem}
 For the constrained synchronization problem restricted to input automata
 whose automaton graph is a TTSPL graph, 
 we have the same classification
 result for small constraint PDFAs as stated in Theorem~\ref{thm:classification_waa}.
 In particular, we can realize $\NP$-complete constrained problems.
 %
\end{theorem}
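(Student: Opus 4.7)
The plan is to observe that all upper bounds transfer verbatim while the NP-hardness reductions from Proposition~\ref{prop:np_complete_cases} can be massaged into TTSPL form. Since every TTSPL automaton is weakly acyclic, containment in $\NP$ follows from Theorem~\ref{thm:constraint_sync_weakly_acyclic_in_NP}. The polynomial-time algorithms underlying Proposition~\ref{prop:a_plus_b_c} and Proposition~\ref{prop:a_plus_b_c_and_more} are unchanged when the input class is shrunk, and the cases with $|P|\le 1$, or $|P|=2$ and $|\Sigma|\le 2$, remain polynomial by Theorem~\ref{thm:classification_MFCS_paper}. So the only real task is to show, for each of the nine constraint languages in Proposition~\ref{prop:np_complete_cases}, that its $\NP$-hardness reduction can be redesigned so that the output automaton has a TTSPL underlying graph.

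To that end, I would revisit the reductions in Proposition~\ref{prop:np_complete_cases}. Each of them produces $m$ pairwise disjoint clause strands together with edges routing intermediate states to the sink $q_f$. A single strand equipped with its edges into $q_f$ is already two-terminal series-parallel: every branching edge into $q_f$ is a parallel edge sharing the endpoints of the strand, and self-loops are permitted in TTSPL graphs. The only obstruction to TTSPL-ness is therefore that the $m$ strands originate at $m$ distinct heads rather than at one common source.

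The fix, in the spirit of~\cite{DBLP:conf/ncma/BruchertseiferF19}, is to prepend a balanced binary tree of $O(m)$ fresh auxiliary states whose root becomes the unique TTSPL source and whose $m$ leaves are identified with the heads of the clause strands; the parallel composition of the strand gadgets under this tree, with $q_f$ as the common sink, is a TTSPL graph. Following the hint preceding the theorem, no new letters are needed: two of the three letters of $\Sigma$ label the two children at each internal node, and the third letter on each auxiliary node sends it directly to $q_f$, which becomes a parallel edge in the TTSPL decomposition. I would choose the two tree letters case by case (for instance $\{b,c\}$ for $a(b+c)^*$ with $a$ as the to-sink letter, $\{a,b\}$ for $(a+b)^*c(a+b)^*$ with $c$ as the to-sink letter, and similarly for the remaining seven languages), so that the words in the constraint language that synchronize the augmented automaton are exactly those that first traverse the tree with the two tree letters and then synchronize the original clause strands.

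The main obstacle will be the case-by-case verification that the chosen tree letters are always compatible with the prefix behaviour prescribed by the constraint language, and that the auxiliary states cannot be exploited to produce synchronizing words that bypass the SAT encoding. The first point is handled by mimicking, for each language, the reset-letter bookkeeping already carried out in Proposition~\ref{prop:np_complete_cases}; the second is handled uniformly by the third-letter-to-$q_f$ convention, which guarantees that any branch leaving the tree via the ``wrong'' letter is immediately absorbed into $q_f$, so the only route to synchronization remains through the original clause-strand mechanism encoding a satisfying assignment.
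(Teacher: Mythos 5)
Your proposal matches the paper's argument essentially verbatim: upper bounds transfer because TTSPL automata form a subclass of the weakly acyclic ones, and $\NP$-hardness is preserved by prepending a tree of auxiliary states to the reductions of Proposition~\ref{prop:np_complete_cases}, using two of the existing letters for the branching and the third to send the auxiliary states directly to the sink (yielding the parallel compositions that make the graph TTSPL). This is exactly the construction the paper sketches, in the same spirit as \cite{DBLP:conf/ncma/BruchertseiferF19}, so your approach is correct and not materially different.
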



\section{Conclusion}

We have investigated the complexity of the constrained synchronization
problem for weakly acyclic input automata. We noticed that in this setting, the problem
is always in $\NP$. In the general setting, it was possible to have $\PSPACE$-complete constrained 
problems, whereas this is no longer possibly in our setting. 
We have investigated the complexities for small constrained automata in the same way as
done in the general case in~\cite{DBLP:conf/mfcs/FernauGHHVW19}.
We found out that certain problems that are $\PSPACE$-complete in general
become $\NP$-complete, whereas others that are $\PSPACE$-complete even become polynomial time solvable.
A similar phenomenon was observed for certain subset synchronization problems
that are all $\PSPACE$-complete in general.

It is natural to continue this investigation for other classes of automata, to find out what
properties are exactly needed to realize $\PSPACE$-complete problems or for what other
classes we only have $\NP$-complete constrained problems, or what are the minimum requirements
on the input automata to realize $\NP$-complete problems.

Also, a complete classification of all possible realizable complexities, a problem orginally posed in~\cite{DBLP:conf/mfcs/FernauGHHVW19}, is still open. Hence, as a first step it would be interesting
to know if for our restricted problem only the complexities $\PTIME$
and $\NP$-complete arise, or if we can realize a constrained problem equivalent
to some $\NP$-intermediate candidate problem.

\smallskip \noindent \footnotesize
\textbf{Acknowledgement.} I thank the anonymous reviewers for noticing some issues
in the proofs of Theorem~\ref{thm:set_transporter_weakly_acyclic} and Proposition~\ref{prop:np_complete_cases} that have been fixed. Also, I thank them for pointing out typos and some unclear formulations.

\bibliographystyle{splncs04}
\bibliography{ms} 
\end{document}